\newtheorem{assumption}{Assumption}
\newtheorem*{theorem*}{Theorem}
\newtheorem{theorem}{Theorem}
\newtheorem*{rmk*}{Remark}
\newtheorem{proposition}{Proposition}
\newtheorem{lemma}{Lemma}
\newtheorem{remark}{Remark}
\def\mbb{\mathbb}
\def\I{\mathbbm{1}}
\def\E{\mathbb{E}}
\def\Var{\mathrm{Var}}
\def\Cov{\text{Cov}}
\def\sumn{\sum_{i=1}^n}
\title{A Black-Box Debiasing Framework for Conditional Sampling}
\author{%
  Han Cui\\
  University of Illinois at Urbana-Champaign\\
  Champaign, IL\\
  \texttt{hancui5@illinois.edu}
  \And
  Jingbo Liu\\
  University of Illinois at Urbana-Champaign\\
  Champaign, IL\\
  \texttt{jingbol@illinois.edu}\\
}
\begin{document}

\maketitle

\begin{abstract}
    Conditional sampling is a fundamental task in Bayesian statistics and generative modeling. Consider the problem of sampling from the posterior distribution $P_{X|Y=y^*}$ for some observation $y^*$, where the likelihood $P_{Y|X}$ is known, and we are given $n$ i.i.d. samples $D=\{X_i\}_{i=1}^n$ drawn from an unknown prior distribution $\pi_X$. Suppose that $f(\hat{\pi}_{X^n})$ is the distribution of a posterior sample generated by an algorithm (e.g. a conditional generative model or the Bayes rule) when $\hat{\pi}_{X^n}$ is the empirical distribution of the training data. Although averaging over the randomness of the training data $D$, we have $\mathbb{E}_D\left(\hat{\pi}_{X^n}\right)= \pi_X$, we do not have $\mathbb{E}_D\left\{f(\hat{\pi}_{X^n})\right\}= f(\pi_X)$ due to the nonlinearity of $f$, leading to a bias. In this paper we propose a black-box debiasing scheme that improves the accuracy of such a naive plug-in approach. For any integer $k$ and under boundedness of the likelihood and smoothness of $f$, we generate samples $\hat{X}^{(1)},\dots,\hat{X}^{(k)}$ and weights $w_1,\dots,w_k$ such that $\sum_{i=1}^kw_iP_{\hat{X}^{(i)}}$ is a $k$-th order approximation of $f(\pi_X)$, where the generation process treats $f$ as a black-box. Our generation process achieves higher accuracy when averaged over the randomness of the training data, 
    without degrading the variance,
    which can be interpreted as improving memorization without compromising generalization in generative models.
\end{abstract}

\section{Introduction}\label{sec:intro}

Conditional sampling is a major task in Bayesian statistics and generative modeling.
Given an observation $y^*$, the objective is to draw samples from the posterior distribution $P_{X|Y=y^*}$,
where the likelihood $P_{Y|X}$ is known but the prior distribution $\pi_X$ is unknown.
Instead, we are provided with a dataset $D=\{X_i\}_{i=1}^n$ consisting of $n$ i.i.d.~samples drawn from $\pi_X$.

The setting is common in a wide range of applications,
including inpainting and image deblurring \citep{kawar2022advances, chung2023diffusion} (where $X$ is an image and $Y|X$ is a noisy linear transform),
text-conditioned image generation \citep{hao2023advances, pmlr-v162-nichol22a}(where $X$ is an image and $Y$ is a natural language prompt),
simulating biomedical structures with desired properties,
and trajectory simulations for self-driving cars.
Moreover, conditional sampling is equally vital in high-impact machine learning and Bayesian statistical methods, particularly under distribution shift, such as in transfer learning.
For instance, conditional sampling has enabled diffusion models to generate trajectories under updated policies, achieving state-of-the-art performance in offline reinforcement learning \citep{pmlr-v162-janner22a, ajay2023is, yuan2023reward}. 
Pseudo-labeling, a key technique for unsupervised pretraining \citep{lee2013pseudo} and transfer learning calibration \citep{wang2024pseudolabelingkernelridgeregression}, relies on generating conditional labels.
Additionally, conditional diffusion models seamlessly integrate with likelihood-free inference \citep{kyle2020thefrontier, pmlr-v235-sharrock24a, zammit2024neural}.
Existing approaches often use generative models such as \emph{VAEs} or \emph{Diffusion models} to generate samples by learning $P_{X|Y=y^*}$ implicitly from the data.

Our work focuses on  approximating the true posterior $P_{X|Y=y^*}$ using the observed samples $D=X^n=(X_1, \ldots, X_n)$ and the new observation $y^*$,
but without the knowledge of the prior.
Denote by $P_{\hat{X}|Y=y^*,D}$ the approximating distribution. 
We can distinguish two kinds of approximations: 
First, $P_{\hat{X}|Y=y^*,D}\approx P_{X|Y=y^*}$ with high probability over $D$,
which captures the \emph{generalization} ability since the model must learn the distribution from the training samples. 
This criterion is commonly adopted in estimation theory and has also been examined in the convergence analysis of generative models \cite{oko2023diffusion,zhang2024minimax,yuan2023reward,wibisono2024optimal}.
Second, $\mathbb{E}_D(P_{\hat{X}|Y=y^*,D})\approx P_{X|Y=y^*}$ 
is a weaker condition since it only requires approximation when averaged over the randomness of the training data, 
but is still useful in some sampling and generative tasks, e.g.\ generating samples for bootstrapping or Monte Carlo estimates of function expectations.
The second condition captures the ability to \emph{memorize} or \emph{imitate} training sample distribution.
It is interesting to note that in the unconditional setting (i.e., without distribution shift), a permutation sampler can perfectly imitate the unknown training data distribution, even if $n=1$, so the problem is trivial from the sample complexity perspective. 
However, in the conditional setting, it is impossible to get such a perfect imitation with finite training data, as a simple binary distribution example in Section \ref{sec:problem_exact_unbiased} illustrates.
It naturally leads to the following question:

\textit{How fast can the posterior approximation converge to the true posterior as $n\to\infty$,
and is there a sampling scheme achieving this convergence rate?}

\textbf{Contribution.} We address the question above by proposing a novel debiasing framework for posterior approximation. Our main contributions can be summarized as follows:
\begin{itemize}
    \item \textbf{Debiasing framework for posterior approximation.}
    We introduce a novel debiasing framework for posterior approximation with an unknown prior. 
    Our method leverages the known likelihood $P_{Y|X}$ and the observed data to construct an improved approximate posterior $\widetilde{P}_{X^n}(x|y^*)$ with provably reduced bias.
    In particular, let $f(\hat{\pi}_{X^n})$ represent the distribution of a posterior sample generated by an algorithm $f$ when $\hat{\pi}_{X^n}$ is the empirical distribution of the training data.
    Then for any integer $k$, assuming that the likelihood function $P_{Y|X}$ is bounded and $f$ is sufficiently smooth, we generate samples $\hat{X}^{(1)},\dots,\hat{X}^{(k)}$ from $f$ based on multiple resampled empirical distributions. 
    These are then combined with designed (possibility negative) weights $w_1,\dots,w_k$ to construct an approximate posterior:
    \begin{align*}
        \widetilde{P}_{X^n}(\cdot|y^*)=\sum_{i=1}^kw_iP_{\hat{X}^{(i)}}
    \end{align*}
    which is a $k$-th order approximation of $f(\pi_X)$, treating the generation process $f$ as a black-box. 
    Our generation process achieves higher accuracy when averaged over the randomness of the training data, but not conditionally on the training data, which highlights the trade-off between memorization and generalization in generative models.
    Specifically, we do not assume any parametric form for the prior and our method can achieve a bias rate of $\mathcal{O}(n^{-k})$ for any prescribed integer $k$ and a variance rate of $\mathcal{O}(n^{-1})$.
    \item \textbf{Theoretical bias and variance guarantees.}
    We establish theoretical guarantees on both bias and variance for the Bayes-optimal sampler under continuous prior setting and for a broad class of samplers $f$ with a continuous $2k$-th derivative, as specified in Assumption \ref{asmp:discrete_case}, under the discrete prior setting.
    The proposed debiasing framework can also be applied in a black-box manner (see Remark \ref{rmk:apply_general} for the intuition), making it applicable to a broad class of state-of-the-art conditional samplers, 
    such as diffusion models and conditional VAE.
    Based on this perspective, we treat the generative model $f$ as a black box that can output posterior samples given resampled empirical distributions.
    Applying $f$ to multiple recursive resampled versions of the training data and combining the outputs with polynomial weights, we obtain a bias-corrected approximation of the posterior. 
    The procedure is described in Algorithm \ref{alg:posterior}.
\end{itemize}

\begin{algorithm}[ht]
\caption{Posterior Approximation via Debiasing Framework}
\label{alg:posterior}
\begin{algorithmic}[1]
\REQUIRE 
Observation $y^*$, likelihood $P_{Y|X}$,
data $X^n = (X_1, \ldots, X_n)$,
number of steps $k$,
a black-box conditional sampler $f$ (i.e., a map from a prior distribution to a posterior distribution)  
\ENSURE $\hat{X}^{(j)}, j=1, \ldots, k$ such that $\sum_{j=0}^{k-1} \binom{k}{j+1}(-1)^j P_{\hat{X}^{(j+1)}}$ is a high-order approximation of the posterior $P_{X|Y=y^*}$

\STATE Initialize $\hat{p}^{(1)} := \hat{\pi}_{X^n}$ \hfill 
\FOR{$\ell = 2$ to $k$}
    \STATE Generate $n$ i.i.d. samples from $\hat{p}^{(\ell-1)}$
    \STATE Let $\hat{p}^{(\ell)}$ be the empirical distribution of the sampled data
\ENDFOR
\FOR{$j = 1$ to $k$}
    \STATE Generate samples $\hat{X}^{(j)} \sim f(\hat{p}^{(j)})$
\ENDFOR
\STATE Return $\hat{X}^{(j)}, j=1, \ldots, k$
\end{algorithmic}
\end{algorithm}

Our approach is also related to importance sampling.
Since the true posterior $P_{X|Y}$ is intractable to compute, we can use expectations under the debiased posterior 
$\widetilde{P}_{X^n}(x|y^*)$ to approximate the expectations under the true posterior $P_{X|Y=y^*}$.
For a test function $h$, we estimate $\E_{P_{X|Y=y^*}}\{h(X)\}$ by
\begin{align}
    \E_{\widetilde{P}_{X^n}(x|y^*)}\left\{h(X)\right\}\approx \frac{1}{N}\sum_{i=1}^N h(\tilde{X}_j)\frac{\widetilde{P}_{X^n}(\tilde{X}_j|y^*)}{q(\tilde{X}_j|y^*)},\label{eq:importance_sampling}
\end{align}
where $\tilde{X}_j\sim q(x|y^*)$ for a chosen proposal distribution $q$.
This resembles our method, in which we approximate the true posterior by a weighted combination $\sum_{i=1}^kw_iP_{\hat{X}^{(i)}}$.
And in \eqref{eq:importance_sampling}, the term $\widetilde{P}_{X^n}(\tilde{X}_j|y^*)/q(\tilde{X}_j|y^*)$ can be interpreted as a weight assigned to each sample, analogous to the weights $w_i$ in our framework.
Therefore, we expect that Algorithm~\ref{alg:posterior} can be broadly applied to Monte Carlo estimates of function expectations, similar to the standard importance sampling technique.

\section{Related work}

\textbf{Jackknife Technique.} Our work is related to the jackknife technique \citep{quenouille1956notes}, a classical method for bias reduction in statistical estimation that linearly combines estimators computed on subsampled datasets.
Specifically, the jackknife technique generates leave-one-out (or more generally, leave-$s$-out where $s\ge 1$) versions of an estimator, and then forms a weighted combination to cancel lower-order bias terms.
Recently, \citet{nowozin2018debiasing} applied the jackknife to the importance-weighted autoencoder (IWAE) bound $\hat{\mathcal{L}}_n$, which estimates the marginal likelihood $\log \pi(x)$ using $n$ samples. 
While $\hat{\mathcal{L}}_n$ is proven to be an estimator with bias of order $\mathcal{O}(n^{-1})$, the jackknife correction produces a new estimator with reduced bias of order $\mathcal{O}(n^{-m})$.
Our paper introduces a debiaisng framework based on the similar idea that using a linear combination of multiple approximations to approximate the posterior.

\noindent\textbf{Conditional Generative Models.} 
Conditional generative models have become influential and have been extensively studied for their ability to generate samples from the conditional data distribution $P(\cdot|y)$ where $y$ is the conditional information.
This framework is widely applied in vision generation tasks such as text-to-image synthesis \citep{pmlr-v162-nichol22a, yang2024mastering, black2024training} where $y$ is an input text prompt, and image inpainting \citep{li2022srdiff, wang2024instantid} where $y$ corresponds to the known part of an image.
We expect that our proposed debiasing framework could work for a broad class of conditional generative models to construct a high order approximation of the posterior $P(\cdot|y)$.

\noindent\textbf{Memorization in Generative Models.} 
The trade-off between memorization and generalization has been a focus of research in recent years.
In problems where generating new structures or preserving 
privacy of training data is of high priority, generalization is preferred over memorization.
For example, a study by \citet{Carlini2023extracting} demonstrates that diffusion models can unintentionally memorize specific images from their training data and reproduce them when generating new samples.
To reduce the memorization of the training data,
\citet{somepalli2023understanding} applies randomization and augmentation techniques to the training image captions.
Additionally, \citet{yoon2023diffusion} investigates the connection between generalization and memorization, proposing that these two aspects are mutually exclusive. Their experiments suggest that diffusion models are more likely to generalize when they fail to memorize the training data.
On the other hand, memorizing and imitating the training data may be intentionally exploited,
if the goal is  Monte Carlo sampling for evaluations of expected values, or if the task does not involve privacy issues, e.g.\ image inpainting and reconstruction.
In these applications, the ability to imitate or memorize the empirical distribution of the training data becomes essential, especially when generalization is unattainable due to the insufficient data.
Our work focuses on memorization phase and shows that it is possible to construct posterior approximations with provably reduced bias by exploiting the empirical distribution.

\noindent\textbf{Mixture-Based Approximation of Target Distributions.} 
Sampling from a mixture of distributions $a_1 P_{X_1} + a_2 P_{X_2} + \cdots + a_k P_{X_k}$
to approximate a target distribution $ P^*$ is commonly used in Bayesian statistics, machine learning, and statistical physics, especially when individual samples or proposals are poor approximations, but their ensemble is accurate. 
Traditional importance sampling methods often rely on positive weights, but recent work has expanded the landscape to include more flexible and powerful strategies, including the use of signed weights and gradient information.
For example, \citet{oates2016contorl} uses importance sampling and control functional estimators to construct a linear combination of estimators with weights $a_k$ to form a variance-reduced estimator for an expectation under a target distribution $P^*$.
\citet{liu2017blackbox} select the weights $a_k$ by minimizing the empirical version of the kernelized Stein discrepancy (KSD), which often results in negative weights.

\section{Problem setup and notation}\label{sec:problem_setup}
Consider a dataset $\{X_i\}_{i=1}^n$ consisting of $n$ independent and identically distributed (i.i.d.) samples,
where $X_i\in\mathcal{X}$ is drawn from an unknown prior distribution $\pi_X$ and the conditional distribution $P_{Y|X}$ is assumed to be known.
In the Bayesian framework, the posterior distribution of $X$ given $Y$ is given by 
\begin{align*}
  P_{X|Y}(dx|y) = \frac{P_{Y|X}(y|x)\pi_X(dx)}{\displaystyle\int P_{Y|X}(y|x)\pi_X(dx)}.
\end{align*}
Given the observed data $X^n=(X_1, \cdots, X_n)$ and the new observation $y^*$, our goal is to approximate the true posterior $P_{X|Y=y^*}$.

\subsection{Naive plug-in approximation}
A natural approach is to replace the unknown prior $\pi_X$ with its empirical counterpart 
\begin{align*}
    \hat{\pi}_{X^n}=n^{-1}\sum_{i=1}^n\delta_{X_i}
\end{align*} in the Bayes' rule to compute an approximate posterior which yields the plug-in posterior
\begin{align}
    \widehat{P}_{X|Y}(dx|y^*) = \frac{P_{Y|X}(y^*|x)\hat{\pi}_{X^n}(dx)}{\displaystyle\int P_{Y|X}(y^*|x)\hat{\pi}_{X^n}(dx)}.\label{eq:bayes_posterior}
\end{align}

Note that even though $\mathbb{E}_D\left(\hat{\pi}_{X^n}\right)=\pi_X$, the nonlinearity of Bayes' rule makes the resulting posterior \eqref{eq:bayes_posterior} still biased,
that is, $\mathbb{E}_D\left\{\widehat{P}_{X|Y}(\cdot|y^*)\right\}\neq P_{X|Y}(\cdot|y^*)$. 
If the denominator in \eqref{eq:bayes_posterior} were replaced with $\int P_{Y|X}(y^*|x)\pi_X(dx)$, then averaging the {\rm R.H.S.} of \eqref{eq:bayes_posterior} over the randomness in $X^n$ would yield the true posterior $P_{X|Y}(dx|y^*)=P_{Y|X}(y^*|x)\pi_X(dx)/\int P_{Y|X}(y^*|x)\pi_X(dx)$ exactly.

For typical choices of $P_{Y|X}$ which have nice conditional density (e.g., the additive Gaussian noise channel), $\int P_{Y|X}(y^*|x)\hat{\pi}_{X^n}(dx)$ converges at the rate of $n^{-1/2}$, by the central limit theorem.
Consequently,
$\mathbb{E}_D(\widehat{P}_{X|Y=y^*})$
converges to the true posterior at the rate $\tilde{\mathcal{O}}(n^{-1/2})$ in the $\infty$-Renyi divergence metric regardless of the smoothness of $\pi_X$.
Under appropriate regularity conditions, we can in fact show that $\mathbb{E}_D(\widehat{P}_{X|Y=y^*})$
converges at the rate of $\tilde{\mathcal{O}}(n^{-1})$, which comes from the variance term in the Taylor expansion.
Naturally, we come to an essential question: can we eliminate the bias entirely?
That is, is it possible that $\mathbb{E}_D\{\widehat{P}_{X|Y}(\cdot|y^*)\}= P_{X|Y}(\cdot|y^*)$?

\subsection{Impossibility of exact unbiasedness}\label{sec:problem_exact_unbiased}

Exact unbiasedness is, in general, unattainable.
Consider the simple case where $X$ is binary, that is, $X\sim{\rm Bern}(q)$ for some unknown parameter $q\in(0,1)$.
Define the likelihood ratio $\alpha = \alpha(y^*) := P_{Y|X}(y^*|1)/P_{Y|X}(y^*|0)$.
Then the true posterior is 
\begin{align*}
    X|Y=y^* \sim {\rm Bern}\left(\frac{\alpha q}{\alpha q+1-q}\right).
\end{align*}
On the other hand, if we approximate the posterior distribution as $\hat{P}_{X|Y}(1|y=y^*)={\rm Bern}(p(k))$ upon seeing $k$ outcomes equal to 1, then 
\begin{align}
    \E_D\left\{\widehat{P}_{X|Y}(1|y^*)\right\}=\sum_{k=0}^n
    p(k)\binom{n}{k}q^k(1-q)^{n-k},\label{eq:poly_binary_case}
\end{align}
which is a polynomial function of $q$, and hence cannot equal the rational function $\alpha q/(\alpha q+1-q)$ for all $q$.
This implies that an \emph{exact} imitation, in the sense that $\E_D\{\widehat{P}_{X|Y}(\cdot|y^*)\}=P_{X|Y}(\cdot|y^*)$, $\forall \pi_X$, is impossible.
However, since a rational function can be \emph{approximated} arbitrarily well by polynomials,
this does not rule out the possibility that a better sampler achieving convergence faster than, say, the $\tilde{\mathcal{O}}(n^{-1/2})$ rate of the naive plug-in method.
Indeed, in this paper we propose a black-box method that can achieve convergence rates as fast as $\mathcal{O}(n^{-k})$ for any fixed $k>0$.

\subsection{Objective and notation}

Since the bias in the plug-in approximation arises from the nonlinearity of Bayes’ rule, we aim to investigate whether a faster convergence rate can be achieved.
Our objective is to construct an approximation $\widetilde{P}_{X^n}(x|y=y^*)$ that improves the plug-in approximation by reducing the bias.
Specifically, the debiased approximation satisfies the following condition:
\begin{align*}
  \left|\E_{X^n}\left\{\widetilde{P}_{X^n}(x|y=y^*)\right\}-P_{X|Y}(x|y^*)\right|<\left|\E_{X^n}\left\{\widehat{P}_{X|Y}(x|y^*)\right\}-P_{X|Y}(x|y^*)\right|.
\end{align*}
More generally, we can replace the Bayes rule by an arbitrary map $f$ from a prior to a posterior distribution (e.g.\ by a generative model), and the goal is a construct a debiased map $\tilde{f}$ such that 
\begin{align*}
    \left\|\mathbb{E}_{X^n}\tilde{f}(\widehat{\pi}_{X^n})-f(\pi)\right\|_{\rm TV}< \Big\|\mathbb{E}_{X^n}f(\widehat{\pi}_{X^n})-f(\pi)\Big\|_{\rm TV}.
\end{align*}

\noindent\textbf{Notation.}
Let $\delta_{x}$ denote the Dirac measure, $\|\cdot\|_{\rm TV}$ denote the total variation norm.
For any positive integer $m$, denote $[m]=\{1,\ldots, m\}$ as the set of all positive integers smaller than all equal to $m$.
Write $b_n=\mathcal{O}(a_n)$ if $b_n/a_n$ is bounded as $n\rightarrow\infty$.
Write $b_n=\mathcal{O}_{s}(a_n)$ if $b_n/a_n$ is bounded by $C(s)$ as $n\rightarrow\infty$ for some constant $C(s)$ that depends only on $s$.
We use the notation $a\lesssim b$ to indicate that there exists a constant $C>0$ such that $a\le Cb$. Similarly, $a\lesssim_k b$ means that there exists a constant $C(k)>0$ that depends only on $k$ such that $a\le C(k)b$.
Furthermore, for notational simplicity, we will use $\pi$ to denote the true prior $\pi_X$ and $\hat{\pi}$ to denote the empirical prior $\hat{\pi}_{X^n}$ in the rest of the paper.

\section{Main result}
\subsection{Debiased posterior approximation under continuous prior}
Let $\Delta_{\mathcal{X}}$ denote the space of probability measures on $\mathcal{X}$.
Define the likelihood function $\ell(x):=P_{Y|X}(y^*|x)$, which represents the probability of observing the data $y^*$ given $x$. 
Let $f:\Delta_{\mathcal{X}}\rightarrow \Delta_{\mathcal{X}}$ be a map from the prior measure to the posterior measure, conditioned on the observed data $y^*$. 
Let $B_n$ be the operator such that for any function $f:\Delta_{\mathcal{X}}\rightarrow \Delta_{\mathcal{X}}$, 
\begin{align}\label{eq:operator}
  B_nf(p)=\E\left\{f(\hat{p})\right\},
\end{align}
where $\hat{p}$ denotes the empirical measure of $n$ i.i.d. samples from measure $p$.

We consider the case that $f$ represents a mapping corresponding to the Bayes posterior distribution.
Using Bayes' theorem, for any measure $\pi\in\Delta_{\mathcal{X}}$ and any measurable set $A\subset\mathcal{X}$, the posterior measure $f(\pi)$ is expressed as
\begin{align*}
    f(\pi)(A) = \frac{\displaystyle\int_A \ell(x)\pi(dx)}{\displaystyle\int_\mathcal{X} \ell(x)\pi(dx)}.
\end{align*}

As discussed in Section \ref{sec:problem_setup}, the equality $B_nf(\pi)=f(\pi)$ is not possible due to the nonlinearity of $f$.
However, we can achieve substantial improvements over the plug-in method by using polynomial approximation techniques analogous to those from prior statistical work by \citet{Cai_2011} and \citet{wu2020polynomial}.
For $k\ge 1$, we define the operator $D_{n,k}$ as a linear combination of the iterated operators $B_n^j$ for $j=0, \ldots, k-1$:
\begin{align*}
  D_{n,k}=\sum_{j=0}^{k-1} \binom{k}{j+1}(-1)^{j}B_n^j.
\end{align*}

\begin{assumption}
    The likelihood function $\ell$ is bounded, i.e., there exists $0<L_1\le L_2$ such that $L_1\le \ell(x)\le L_2$.\label{asmp:continuous_case}
\end{assumption}

The following theorem provides a systematic method for constructing an approximation of $f(\pi)$ with an approximation error of order $\mathcal{O}(n^{-k})$ for any desired integer $k$.

\begin{theorem}\label{thm:polynomial_approximation_bias}
  Under Assumption \ref{asmp:continuous_case},
  for any measurable set $A\subset\mathcal{X}$ and any $k\in\mathbb{N}^{+}$, we have 
  \begin{align}
    \left\|\E_{X^n}\left\{D_{n,k}f(\hat{\pi})\right\}-f(\pi)\right\|_{\rm TV}&= \mathcal{O}_{L_1, L_2, k}(n^{-k}),\label{eq:continuous_bias}\\
    \Var_{X^n}\left\{D_{n,k}f(\hat{\pi})(A)\right\}&=\mathcal{O}_{L_1, L_2, k}(n^{-1}).
  \end{align}
\end{theorem}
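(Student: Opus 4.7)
The plan rests on the binomial identity
\begin{align*}
B_n D_{n,k} \;=\; \sum_{j=1}^{k}\binom{k}{j}(-1)^{j-1} B_n^{j} \;=\; I - (I-B_n)^k.
\end{align*}
Since $\E_{X^n}\{D_{n,k}f(\hat\pi)\}=B_n D_{n,k}f(\pi)$, this gives
$$
\E_{X^n}\{D_{n,k}f(\hat\pi)\}-f(\pi)\;=\;-(I-B_n)^k f(\pi),
$$
so the task reduces to bounding $\|(I-B_n)^k f(\pi)\|_{\rm TV}$ by $\mathcal{O}(n^{-k})$.

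First I would obtain a quantitative single-step asymptotic expansion of $B_n f(\pi)(A)$. Set $u_A(\mu):=\int_A\ell\,d\mu$ and $v(\mu):=\int\ell\,d\mu$, so $f(\pi)(A)=u_A(\pi)/v(\pi)$. Assumption \ref{asmp:continuous_case} gives $L_1\le v(\mu)\le L_2$ for every probability $\mu$. Letting $\delta:=(v(\hat\pi)-v(\pi))/v(\pi)$, the algebraic identity
$$
\frac{1}{1+\delta}=\sum_{m=0}^{K-1}(-\delta)^m+\frac{(-\delta)^K}{1+\delta},
$$
together with the uniform bound $v(\pi)/v(\hat\pi)\le L_2/L_1$ and the central-moment bounds $\E|\delta|^{m}=\mathcal{O}(n^{-\lceil m/2\rceil})$ for means of bounded centered i.i.d.\ variables, yield, for every $K$,
$$
B_n f(\pi)(A)=f(\pi)(A)+\sum_{m=1}^{K-1}n^{-m}a_m(\pi,A)+\mathcal{O}_{L_1,L_2,K}\!\left(n^{-K}\right),
$$
uniformly in the measurable set $A$, where each $a_m(\pi,A)$ is a polynomial combination of the linear moments $\int\ell^{p}\mathbf{1}_A\,d\pi$ and $\int\ell^{q}\,d\pi$ divided by a positive power of $v(\pi)\ge L_1$.

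The core of the argument is then an induction that uses this expansion repeatedly. Formally writing $B_n=I+n^{-1}L_1+n^{-2}L_2+\cdots$, the crucial structural invariant is that every coefficient $a_m(\pi,A)$ is of the same Bayes-posterior type as $f(\pi)(A)$, so the expansion above can be re-applied to $B_n a_m$ without loss of smoothness. At the level of formal operators this gives
$$
(I-B_n)^k=(-1)^k n^{-k}L_1^k+\mathcal{O}(n^{-k-1}),
$$
and the quantitative remainders from the single-step expansion turn this into $\sup_A|(I-B_n)^k f(\pi)(A)|=\mathcal{O}_{L_1,L_2,k}(n^{-k})$. Because $(I-B_n)^k f(\pi)$ is a signed measure with zero total mass, $\|\cdot\|_{\rm TV}$ is at most twice this supremum, which delivers the bias bound.

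For the variance, the triangle inequality gives
$$
\Var_{X^n}\{D_{n,k}f(\hat\pi)(A)\}\le k\sum_{j=0}^{k-1}\binom{k}{j+1}^{2}\Var_{X^n}\{B_n^j f(\hat\pi)(A)\},
$$
and each $B_n^j f(\hat\pi)(A)$ is a deterministic bounded functional of $\hat\pi$. Changing one $X_i$ perturbs $\hat\pi$ by $2/n$ in total variation, which, again using $v\ge L_1$, perturbs every Bayes-type ratio by $\mathcal{O}(1/n)$ (coupling argument: $B_n$ preserves Lipschitzness up to a factor of $2$, so $B_n^j f$ is $\mathcal{O}_{L_1,L_2,k}(1)$-Lipschitz in total variation). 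The Efron--Stein inequality then yields $\Var=\mathcal{O}_{L_1,L_2,k}(n^{-1})$ uniformly in $A$ and $j\le k$. The main obstacle I anticipate is the inductive bookkeeping in the third paragraph: verifying that the coefficients $a_m(\pi,A)$ retain the same structural smoothness and uniform-in-$A$ boundedness as $f$ itself, so that applying $I-B_n$ at every level actually gains another factor of $n^{-1}$ rather than stalling at some earlier order.
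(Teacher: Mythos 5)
Your binomial reduction $B_n D_{n,k}=I-(I-B_n)^k$ and the resulting task of bounding $\|(I-B_n)^k f(\pi)\|_{\rm TV}$ match the paper exactly (the paper phrases it as $B_nD_{n,k}=C_{n,k}$ and shows $C_{n,k}f(\pi)-f(\pi)=\mathcal{O}(n^{-k})$). The high-level strategy of Taylor-expanding the ratio $u_A/v$ and tracking $n$-orders is also the same. However, the step you flag as ``the main obstacle'' is precisely where the paper's proof does its real work, and you have not resolved it. Your plan iterates a single-step expansion $B_n f(\pi)(A)=f(\pi)(A)+\sum_m n^{-m}a_m(\pi,A)+\cdots$ and asserts that the $a_m$ are ``of the same Bayes-posterior type as $f$.'' That is not accurate: $f(\pi)(A)=u_A(\pi)/v(\pi)$ is a single ratio of two linear functionals, whereas your $a_m$ are, by your own description, polynomial combinations of moments $\int\ell^p\mathbf{1}_A\,d\pi$, $\int\ell^q\,d\pi$ divided by powers of $v(\pi)$; re-applying $B_n$ to such objects regenerates random denominators $v(\hat\pi)^{-\text{power}}$ and requires another geometric-series expansion with a different remainder structure, and it is exactly unclear whether each application gains a full factor $n^{-1}$ uniformly. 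So the ``crucial structural invariant'' is a conjecture, not a verified step, and without it the argument stalls.

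The paper sidesteps this difficulty by a different decomposition: it uses recursive resampling to write $B_n^jf(\pi)(A)=\E\bigl(\mu_A^{(j-1)}/(e_n^{(j-1)}+\mu)\bigr)$ with the \emph{deterministic} denominator $\mu=\int\ell\,d\pi$, Taylor-expands once in $e_n^{(j-1)}$ (up to order $2k-1$), and thereby reduces the whole bias question to showing $(I-B)^k h_s(\pi)=\mathcal{O}(n^{-k})$ for the denominator-free polynomial functionals $h_s(\pi)=\{\int_A\ell\,d\pi\}\{\int\ell\,d\pi\}^s$. The actual induction then takes place inside a closed, explicitly-parameterized class $\mathcal{A}_s^k$ of such polynomial functionals (Lemma \ref{lemma:order_k_induction}), which is what makes the bookkeeping tractable. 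Your approach keeps the denominator in the induction, which is precisely why you anticipate trouble.

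Your variance argument is a genuinely different and in fact cleaner route: once you verify that $f(\cdot)(A)$ is $\mathcal{O}_{L_1,L_2}(1)$-Lipschitz in total variation (which follows from $L_1\le v\le L_2$) and that $B_n$ is $1$-Lipschitz in total variation under a coupling of the empirical measures, Efron--Stein gives $\Var\{B_n^jf(\hat\pi)(A)\}=\mathcal{O}_{L_1,L_2,k}(n^{-1})$ directly. The paper instead proves this by another Taylor expansion showing $B_n^jf(\pi)(A)=f(\pi)(A)+\mathcal{O}(n^{-1})$ and computing the second moment. Both are valid; yours avoids the extra expansion.

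In summary: the architecture (binomial identity, Taylor expansion, induction on $n$-orders) is the same, your variance route is a nice alternative, but the central induction for the bias bound is the missing piece, and the paper's reduction to a denominator-free polynomial functional class is a nontrivial idea you would need to reinvent (or replace with a careful treatment of the denominator terms) to close the gap.
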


\begin{remark}
    $D_{n,k}f(\hat{\pi})=\sum_{j=0}^{k-1} \binom{k}{j+1}(-1)^{j}B_n^jf(\hat{\pi})$ in \eqref{eq:continuous_bias} can be interpreted as a weighted average of the distribution of some samples.
    Specifically, if we treat the coefficient $\binom{k}{j+1}(-1)^{j}$ as the weight $w_j$ and $B_n^jf(\hat{\pi})$ as the distribution of some sample 
    $\hat{X}^{(j)}$,
    then $D_{n,k}f(\hat{\pi})=\sum_{j=0}^{k-1}w_j P_{\hat{X}^{(j)}}$.
\end{remark}

\begin{remark}
    Recall the binary case discussed in Section \ref{sec:problem_setup}, \eqref{eq:poly_binary_case} illustrates that we cannot get an exact approximation for the true posterior.
    But from \eqref{eq:continuous_bias}, we demonstrate that even if $\left\|\E_{X^n}\left\{D_{n,k}f(\hat{\pi})\right\}-f(\pi)\right\|_{\rm TV}=0$ is impossible, it can be arbitrarily small. 
    Although the theoretical guarantees are derived for the Bayes-optimal sampler, \eqref{eq:continuous_bias} is expected to hold for general sampler $f$ such as diffusion models.
    Here we give the intuition for this conjecture.
    We view the operator $B_n f(\pi):=\mathbb{E}\{f(\hat{\pi})\}$ as a good approximation of $f(\pi)$, i.e., $B_n\approx I$, where $I$ is the identity operator.
    This implies that the error operator $E:=I-B_n$ is a ``small" operator.
    Under this heuristic, if $E f(\pi)=\mathcal{O}(n^{-1})$, 
    intuitively we have $E^k f(\pi)=\mathcal{O}(n^{-k})$.
    Using the binomial expansion of $E^k=(I-B_n)^k$, we have $E^k f(\pi)=f(\pi)-\sum_{j=1}^k\binom{k}{j}(-1)^{j-1}B_n^jf(\pi)=f(\pi)-\mathbb{E}\{\sum_{j=1}^k\binom{k}{j}(-1)^{j-1}B_n^{j-1}f(\hat{\pi})\}=f(\pi)-\mathbb{E}\{D_{n,k}f(\hat{\pi})\}$.
    This representation motivates the specific form of $D_{n,k}$.\label{rmk:apply_general}
\end{remark}

\begin{remark}
    In general, the curse of dimensionality may arise and depends on the specific distribution of $X$ and the likelihood function $\ell$.
    There is no universal relationship between $n$ and the dimension $d$.
    However, to build intuition, we give an example that illustrates how $n$ and $d$ may relate.
    Suppose that $Y=\big(Y(1),\dots,Y(d)\big)$ and $X=\big(X(1),\dots, X(d)\big)$ have i.i.d.\ components,
    and $L_1\le P\big(Y(i)|X(i)\big) \le L_2$ for $1\le i\le d$.
    Then we have $\ell(X):=P(Y|X)\in[L_1^d, L_2^d]$.
    Note that $\mathcal{O}_{L_1, L_2, k}(n^{-k})$ in \eqref{eq:continuous_bias} can be bounded by 
    $C(k)(L_2^d/L_1^d)^{2k}n^{-k}$ for some constant $C(k)$ depending only on k.
    To guarantee that our method scales with dimension, it suffices to let $n$ and $d$ satisfy that $(L_2^d/L_1^d)^{2k}n^{-k} \ll n^{-1}$ when $k\ge 2$, which is equivalent to $kd\ll log(n)$.
\end{remark}

\begin{proof}[\textbf{Sketch proof for Theorem \ref{thm:polynomial_approximation_bias}}]
First let $\mu=\int_{\mathcal{X}} \ell(x)\pi(dx)$ and $\mu_A=\int_{A} \ell(x)\pi(dx)$ and introduce a new operator 
\begin{align*}
    C_{n,k}:=\sum_{j=1}^k \binom{k}{j}(-1)^{j-1}B_n^j,
\end{align*}
then we have $B_n D_{n,k}=C_{n,k}$.
By the definition of $B_n$, it suffices to show that 
\begin{align*}
    C_{n,k}f(\pi)(A)-f(\pi)(A)=\mathcal{O}_{L_1, L_2, k}(n^{-k}).
\end{align*}

The first step is to express $B_n^jf(\pi)$ with the recursive resampled versions of the training data.
Specifically, let $\hat{\pi}^{(0)}=\pi, \hat{\pi}^{(1)}=\hat{\pi}$ and set $(X_1^{(0)}, \ldots, X_n^{(0)})\equiv (X_1, \ldots, X_n)$. For $j=1, \ldots, k$, we define $\hat{\pi}^{(j)}$ as the empirical measure of $n$ i.i.d.~samples $(X_1^{(j-1)}, \ldots, X_n^{(j-1)})$ drawn from the measure $\hat{\pi}^{(j-1)}$.
Additionally, let 
\begin{align*}
    e_n^{(j)}=n^{-1}\sumn \big\{\ell(X_i^{(j)})-\mu\big\}  \quad \text{and} \quad \mu_A^{(j)}=n^{-1}\sum_{i=1}^n\ell(X_i^{(j)})\delta_{X_i^{(j)}}(A).
\end{align*}

Then we have 
\begin{align}
    C_{n,k}f(\pi)(A)&=\sum_{j=1}^k \binom{k}{j}(-1)^{j-1}B_n^j f(\pi)(A)=\sum_{j=1}^k \binom{k}{j}(-1)^{j-1}\E\left(\frac{\mu_A^{(j-1)}}{e_n^{(j-1)}+\mu}\right).\label{eq:sketch_proof_step1}
\end{align}
The second step is to rewrite \eqref{eq:sketch_proof_step1} with Taylor expansion of $\mu_A^{(j-1)}/(e_n^{(j-1)}+\mu)$ with respect to $e_n^{(j-1)}$ up to order $2k-1$.
$L_1\le l(X_i^{(j-1)})\le L_2$ and Hoeffding's inequality implies that the expectation of the residual term $\E\{(e_n^{(j-1)})^{2k}/\xi^{2k+1}\}$ for some $\xi$ between $e_n^{(j-1)}+\mu$ and $\mu$ is $\mathcal{O}_{L_1, L_2, k}(n^{-k})$.
Now we instead to show that 
\begin{align*}
    B_{k,r}:=\sum_{j=1}^{k}\binom{k}{j}(-1)^{j-1}\E\left\{\mu_A^{(j-1)}(e_n^{(j-1)})^{r}\right\}=\mathcal{O}_{L_1, L_2, k}(n^{-k}),
\end{align*}
since \eqref{eq:sketch_proof_step1} is equal to $ \mu_A/\mu+\sum_{r=1}^{2k-1}(-1)^r\mu^{-r-1}B_{k,r}+\mathcal{O}_{L_1, L_2, k}(n^{-k})$.

Define a new operator $B:h\mapsto \E[h(\hat{\pi})]$ for any $h: \Delta_{\mathcal{X}}\rightarrow \mathbb{R}$ and let $h_s(\pi)=\{\int_A\ell(x)\pi(dx)\}\{\int \ell(x)\pi(dx)\}^s$.
Then
\begin{align*}
    B_{k,r} = \sum_{s=0}^r \binom{r}{s}(-1)^{(r-s)}\mu^{r-s}\sum_{j=1}^k\binom{k}{j}(-1)^{j-1}B^jh_s(\pi).
\end{align*}
The last step is to prove 
\begin{align}
    (I-B)^k h_s(\pi)=\mathcal{O}_{L_1, L_2, s}(n^{-k}),\label{eq:sketch_proof_step3}
\end{align}
since \eqref{eq:sketch_proof_step3} is equivalent to $\sum_{j=1}^k\binom{k}{j}(-1)^{j-1}B^jh_s(\pi)=h_s(\pi)+\mathcal{O}_{L_1, L_2, s}(n^{-k})$ .
Finally \eqref{eq:sketch_proof_step3} follows from the fact that $(I-B)^kh_s(\pi)$ can be expressed as a finite sum of the terms which have the following form:
\begin{align*}
    \alpha_{\mathbf{a}, \mathbf{s},v}\left\{\int_A \ell^{v}(x)\pi(dx) \right\} \prod_{i} \left\{ \int \ell^{a_i}(x) \, \pi(dx) \right\}^{s_i},
\end{align*}
where $|\alpha_{\mathbf{a}, \mathbf{s},v}|\le C_k(s)n^{-k}$ for some constnat $C_k(s)$ (see our Lemma \ref{lemma:order_k_induction}).

\end{proof}

\subsection{Debiased posterior approximation under discrete prior}
In this section, we consider the case where $X$ follows a discrete distribution.
As mentioned in Remark~\ref{rmk:apply_general}, the result in Theorem \ref{thm:polynomial_approximation_bias} is expected to hold in a broader class of samplers $f$ under smoothness, extending beyond just the Bayes-optimal sampler $f$.
The assumption of finite $\mathcal{X}$ in this section allows us to simplify some technical aspects in the proof.

Let the support of $X$ be denoted as $\mathcal{X} = \{u_1, u_2, \dots, u_m\}$. 
Assume that $|\mathcal{X}|=m$ is finite, and $X$ is distributed according to an unknown prior distribution $\pi(x)$ such that the probability of $X$ taking the value $u_i$ is given by $\pi(X = u_i) = q_i$ for $i = 1, 2, \dots, m$.  
Here, the probabilities $q_i$ are unknown and satisfy the usual constraints that $q_i \geq 0$ for all $i$ and $\sum_{i=1}^m q_i = 1$.

Let $\mathbf{q}=(q_1, \cdots, q_m)^\top$ represent the true prior probability vector associated with the probability distribution $\pi(x)$. Let $\mathbf{g}$ be a map from a prior probability vector to a posterior probability vector. Then $\mathbf{g(\mathbf{q})}=(g_1(\mathbf{q}), \cdots, g_m(\mathbf{q}))^\top$ is the probability vector associated with the posterior. 
Let $\mathbf{T}=(T_1, \cdots, T_m)^\top$ where $T_j=\sumn \I_{X_i=u_j}$ for $j=1, \cdots, m$.
In such setting, by the definition \eqref{eq:operator} of operator $B_n$, we can rewrite the operator $B_n$ as
\begin{align*}
  B_ng_s(\mathbf{q})=\E\left\{g_s(\mathbf{T}/n)\right\}=\sum_{\bm{\nu}\in\bar{\Delta}_{m}} g_s(\frac{\bm{\nu}}{n}) \binom{n}{\bm{\nu}} \mathbf{q}^{\bm{\nu}},
\end{align*}
where $\bar{\Delta}_{m}=\{\bm{\nu}\in\mathbb{N}^m: \sum_{j=1}^m \nu_j=n\}$ and
\begin{align*}
    \binom{n}{\bm{\nu}}=\dfrac{n!}{\nu_1!\cdots\nu_m!}, \quad \mathbf{q}^{\bm{\nu}}=q_1^{\nu_1}\cdots q_m^{\nu_m}.
\end{align*}

Additionally, let
$\Delta_m=\{\mathbf{q}\in\mbb{R}^m: q_j\ge 0, \sum_{j=1}^m q_j=1\}$ and let $\|\cdot\|_{C^{k}(\Delta_m)}$ denote the ${C^{k}(\Delta_m)}$-norm which is defined as $\|f\|_{C^{k}(\Delta_m)}=\sum_{\|\bm{\alpha}\|_{1}\le k}\|\partial^{\bm{\alpha}}f\|_{\infty}$ for any $f\in C^{k}(\Delta_m)$.

\begin{assumption}
    $|\mathcal{X}|=m$ is finite, and $\max_{s\in[m]}\|g_s\|_{C^{2k}(\Delta_m)} \le G$ for some constant G.\label{asmp:discrete_case}
\end{assumption}

The following theorem provides a systematic method for constructing an approximation of $g_s(\mathbf{q})$ with an error of order $\mathcal{O}(n^{-k})$ for any desired integer $k$.

\begin{theorem}\label{thm:discrete_case}
  If $|\mathcal{X}|=m$, 
  let $\mathbf{q}=(q_1, \cdots, q_m)^\top$ be the true prior probability vector associated with a discrete probability distribution and $\mathbf{T}=(T_1, \cdots, T_m)^\top$ where $T_j=\sumn \I_{X_i=u_j}$ for $j=1, \cdots, m$. 
  Under Assumption \ref{asmp:discrete_case}, the following holds for any $s\in\{1, \cdots, m\}$ and any $k\in\mathbb{N}^{+}$:
  \begin{align*}
    \E_{X^n}\left\{D_{n,k}(g_s)(\mathbf{T}/n)\right\}-g_s(\mathbf{q})&
    =\mathcal{O}_{k, m, G}(n^{-k}),
    \\
    \Var_{X^n}\left\{D_{n,k}(g_s)(\mathbf{T}/n)\right\}&=\mathcal{O}_{k, m, G}(n^{-1}).
  \end{align*}
\end{theorem}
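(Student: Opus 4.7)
The plan is to exploit that in the discrete setting $B_n g(\mathbf{q}) = \sum_{\boldsymbol{\nu}} g(\boldsymbol{\nu}/n) \binom{n}{\boldsymbol{\nu}} \mathbf{q}^{\boldsymbol{\nu}}$ is precisely the multivariate Bernstein operator on $\Delta_m$, and to reduce both conclusions to properties of its iterates. First I would record the algebraic identity $B_n D_{n,k} = I - (I-B_n)^k$, which follows by reindexing $B_n D_{n,k} = \sum_{j=1}^k \binom{k}{j}(-1)^{j-1} B_n^j$ and comparing with the binomial expansion of $(I-B_n)^k$. Since $\mathbb{E}_{X^n}$ acts as one extra application of $B_n$, this yields
\[
    \mathbb{E}_{X^n}\{D_{n,k}(g_s)(\mathbf{T}/n)\} - g_s(\mathbf{q}) = -(I-B_n)^k g_s(\mathbf{q}),
\]
so the bias bound reduces to showing $(I-B_n)^k g_s(\mathbf{q}) = \mathcal{O}_{k,m,G}(n^{-k})$.

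Using $g_s \in C^{2k}(\Delta_m)$, I would Taylor-expand $g_s$ about $\mathbf{q}$ to order $2k-1$ with Lagrange remainder. The multinomial central moments $\mathbb{E}(\mathbf{T}/n-\mathbf{q})^{\boldsymbol{\alpha}}$ are polynomials in $1/n$ supported on powers $n^{-\ell}$ with $\lceil|\boldsymbol{\alpha}|/2\rceil \le \ell \le |\boldsymbol{\alpha}|$ and coefficients uniformly bounded on $\Delta_m$, and $\mathbb{E}\|\mathbf{T}/n-\mathbf{q}\|^{2k} = \mathcal{O}_{k,m}(n^{-k})$. These produce a Voronovskaja-type expansion
\[
    B_n g_s(\mathbf{q}) = g_s(\mathbf{q}) + \sum_{j=1}^{k-1} n^{-j} L_j g_s(\mathbf{q}) + \mathcal{O}_{k,m,G}(n^{-k}),
\]
where each $L_j$ is a differential operator of order $\le 2j$ with polynomial coefficients. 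Iterating this $s$ times and absorbing every operator composition of total order exceeding $2(k-1)$ into the remainder yields
\[
    B_n^s g_s(\mathbf{q}) = g_s(\mathbf{q}) + \sum_{j=1}^{k-1} n^{-j} Q_j(s) g_s(\mathbf{q}) + \mathcal{O}_{k,m,G}(n^{-k}),
\]
with $Q_j(s)$ an operator-valued polynomial in $s$ of degree $j$. Substituting into $(I-B_n)^k g_s(\mathbf{q}) = \sum_{s=0}^k \binom{k}{s}(-1)^s B_n^s g_s(\mathbf{q})$ and invoking the finite-difference identity $\sum_{s=0}^k \binom{k}{s}(-1)^s s^r = 0$ for $r < k$, every $n^{-j}$ term with $j \le k-1$ cancels because $\deg_s Q_j = j < k$, leaving only $\mathcal{O}_{k,m,G}(n^{-k})$.

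For the variance, I would view $h(\mathbf{p}) := D_{n,k}(g_s)(\mathbf{p}) = \sum_{j=0}^{k-1}\binom{k}{j+1}(-1)^j (B_n^j g_s)(\mathbf{p})$ as a function on $\Delta_m$. The multivariate Bernstein operator is Lipschitz-stable ($\|B_n f\|_{\mathrm{Lip}} \lesssim_m \|f\|_{\mathrm{Lip}}$, obtained by writing $\partial_i B_n f - \partial_j B_n f$ as a multinomial expectation of divided differences of $f$), so $\|B_n^j g_s\|_{\mathrm{Lip}} \lesssim_{k,m} G$ for each $j \le k$, making $h$ itself $\mathcal{O}_{k,m,G}(1)$-Lipschitz on $\Delta_m$. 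Since replacing one $X_i$ shifts $\mathbf{T}/n$ by $O(1/n)$, the Efron--Stein inequality applied to $h(\mathbf{T}/n)$ gives $\mathrm{Var}_{X^n}\{h(\mathbf{T}/n)\} = \mathcal{O}_{k,m,G}(n^{-1})$.

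The main obstacle is the iterated Voronovskaja expansion: keeping the polynomial-in-$s$ dependence of each $n^{-j}$ coefficient exact so that the finite-difference cancellation goes through cleanly. Composing operators $L_{j_1}\cdots L_{j_r}$ requires $g_s \in C^{2(j_1+\cdots+j_r)}$, which is why the $C^{2k}$ hypothesis in Assumption \ref{asmp:discrete_case} is sharp, and one must track all implicit constants in the remainders so that they depend only on $k$, $m$, and $G$. The combinatorial cancellation and the variance bound are routine once the expansion is in place.
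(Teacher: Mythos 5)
Your proposal is correct in outline, but it takes a genuinely different route from the paper's, both for the bias and the variance bound.

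For the bias, you use the iterated Voronovskaja expansion $B_n^t g_s = g_s + \sum_{j=1}^{k-1} n^{-j}Q_j(t)g_s + \mathcal{O}(n^{-k})$ with $\deg_t Q_j = j$, and the finite-difference cancellation $\sum_{t=0}^k\binom{k}{t}(-1)^t t^r = 0$ for $r<k$. The paper's Lemma~\ref{lemma:bernstein_approximation} instead argues by induction on the smoothness exponent: Taylor-expand $f$ to order $k$ around $\mathbf{q}$, pass the Bernstein averaging through, control the resulting central-moment sums $T_{n,\boldsymbol{\alpha}}$ via the recurrence in Lemma~\ref{lemma:upper_bound_alpha_central_moment} together with Cauchy--Schwarz, and gain a factor $n^{-1/2}$ per application of $(B_n-I)$. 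Both arguments need $g_s\in C^{2k}$, and both ultimately prove $(I-B_n)^k g_s = \mathcal{O}(n^{-k})$. Your route is more classical and makes the source of the $n^{-k}$ rate (polynomial cancellation under a $k$-th difference) very transparent. However, note that the iterated-Voronovskaja step is heavier than your sketch suggests: to expand $B_n(B_n^{t}g_s)$ you must control $L_j(B_n^{t}g_s)$, which requires expansions of the derivatives $\partial^{\boldsymbol{\alpha}}B_n^{t}g_s$ for $|\boldsymbol{\alpha}|\le 2(k-1)$, not just pointwise values, together with the fact that the multivariate Bernstein operator preserves $C^r$ bounds. An equivalent but cleaner bookkeeping is a direct induction on the iterate count, writing $(I-B_n)^k g_s = (I-B_n)^{k-1}\bigl\{-\sum_{j\ge1}n^{-j}L_j g_s + R\bigr\}$ and noting $L_j g_s\in C^{2k-2j}$ has the right smoothness for the inductive hypothesis; this also avoids the need to establish that $Q_j$ is polynomial in $t$, since the finite-difference identity is never invoked.

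For the variance, your Efron--Stein argument via Lipschitz stability of $B_n$ is quite different from the paper, which instead shows $D_{n,k}g_s(\mathbf{q}) = g_s(\mathbf{q}) + \mathcal{O}(n^{-1})$ pointwise and then computes $\E\{D_{n,k}(g_s)(\mathbf{T}/n)\}^2 = B_n\{D_{n,k}(g_s)\}^2(\mathbf{q})$ and bounds it the same way. Your argument is arguably cleaner; its one load-bearing ingredient is the inequality $\|B_n f\|_{\mathrm{Lip}}\lesssim_m \|f\|_{\mathrm{Lip}}$ for the multivariate Bernstein operator on $\Delta_m$, which is standard but worth stating and proving explicitly since the iterate $B_n^j g_s$ is a polynomial of growing degree and the uniformity in $n$ is what makes the $\mathcal{O}(n^{-1})$ rate come out.
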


Theorem \ref{thm:discrete_case} follows directly from the following lemma, which provides the key approximation result.
\begin{lemma}\label{lemma:bernstein_approximation}
  For any integers $k, m\in \mathbb{N}^{+}$ and any function $f\in C^{k}(\Delta_m)$, we have 
  \begin{align*}
    \|C_{n, \lceil k/2 \rceil}(f)-f\|_{\infty}=\|(B_n-I)^{\lceil k/2 \rceil}(f)\|_{\infty} \lesssim_{k, m} \|f\|_{C^{k}(\Delta_m)}n^{-k/2}.
  \end{align*} 
\end{lemma}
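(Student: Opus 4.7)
My plan is to reduce the claim to a Taylor expansion of $f$ around a base point $\mathbf{q}$ and to control, separately, the polynomial main term and the remainder term of the expansion after applying $(B_n-I)^{\lceil k/2\rceil}$. Recall that $B_n^l h(\mathbf{q}) = \E[h(\hat{p}^{(l)})]$, where $\hat{p}^{(l)}$ denotes the $l$-fold recursively resampled empirical measure started from $\mathbf{q}$. Fix $\mathbf{q}\in\Delta_m$ and write
\begin{align*}
f(\mathbf{p}) = T_{k-1}(\mathbf{p};\mathbf{q}) + R_k(\mathbf{p};\mathbf{q}),\qquad T_{k-1}(\mathbf{p};\mathbf{q}) = \sum_{|\alpha|\le k-1}\frac{\partial^{\alpha}f(\mathbf{q})}{\alpha!}(\mathbf{p}-\mathbf{q})^{\alpha},
\end{align*}
with $|R_k(\mathbf{p};\mathbf{q})| \le C_k\|f\|_{C^{k}(\Delta_m)}\|\mathbf{p}-\mathbf{q}\|^{k}$. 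Expanding $(B_n-I)^{\lceil k/2\rceil}$ binomially in $B_n^l$ for $l=0,\dots,\lceil k/2\rceil$ then splits $(B_n-I)^{\lceil k/2\rceil}f(\mathbf{q})$ into a polynomial contribution coming from $T_{k-1}$ and a remainder contribution coming from $R_k$.

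For the remainder, I would first establish by induction on $l$ the moment bound
\begin{align*}
\E\|\hat{p}^{(l)}-\mathbf{q}\|^{k} \lesssim_{k,l,m} n^{-k/2},\qquad l=0,1,\ldots,\lceil k/2\rceil.
\end{align*}
The base case $l=1$ follows from the Marcinkiewicz--Zygmund inequality applied to the centered multinomial, and the inductive step uses the tower rule together with the fact that, conditionally on $\hat{p}^{(l-1)}$, the increment $\hat{p}^{(l)}-\hat{p}^{(l-1)}$ has $k$-th centered moment of order $n^{-k/2}$ by the same inequality. Combined with the uniformly bounded binomial coefficients, this yields a remainder contribution of order $\|f\|_{C^{k}(\Delta_m)}n^{-k/2}$.

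The core of the proof is bounding the polynomial main term, namely showing that for every multi-index $\alpha$ with $|\alpha|\le k-1$, uniformly in $\mathbf{q}\in\Delta_m$,
\begin{align*}
\bigl|(B_n-I)^{\lceil k/2\rceil}[(\,\cdot-\mathbf{q})^{\alpha}](\mathbf{q})\bigr| \lesssim_{k,m} n^{-\lceil k/2\rceil}.
\end{align*}
For this I would establish two working estimates capturing that each application of $B_n-I$ trades derivatives of $g$ for factors of $n^{-1/2}$:
\begin{align*}
\|(B_n-I)g\|_{C^{r}(\Delta_m)} \le C_{r,m}\,n^{-1}\|g\|_{C^{r+2}(\Delta_m)},\qquad \|(B_n-I)g\|_{C^{0}(\Delta_m)} \le C_{m}\,n^{-1/2}\|g\|_{C^{1}(\Delta_m)}.
\end{align*}
Both follow from first- or second-order Taylor expansions of $g$ inside $B_ng(\mathbf{q}) = \E[g(\mathbf{T}/n)]$, using $\E[\mathbf{T}/n]=\mathbf{q}$ and the $O(1/n)$ scaling of the covariance of $\mathbf{T}/n$, together with a Bernstein-type stability bound $\|B_n g\|_{C^{r}(\Delta_m)} \lesssim_{r,m} \|g\|_{C^{r}(\Delta_m)}$ that preserves higher derivatives through the iteration. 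Iterating the first estimate $\lfloor k/2\rfloor$ times and, for odd $k$, capping with one application of the second estimate, consumes exactly $k$ derivatives of $f$ and produces the factor $n^{-k/2}$.

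The main obstacle I anticipate is the derivative-stability inequality $\|B_n g\|_{C^{r}(\Delta_m)} \lesssim_{r,m} \|g\|_{C^{r}(\Delta_m)}$ on the $(m-1)$-dimensional simplex, without which the iterated estimates lose control of the higher-order Taylor expansions of the intermediate functions $(B_n-I)^{i}f$. This is classical for univariate Bernstein operators and extends to the multivariate simplex via the explicit finite-difference representation of derivatives of multivariate Bernstein polynomials, but the affine constraint $\sum_j q_j=1$ calls for some bookkeeping when differentiating. Once the polynomial bound is in hand, taking the supremum over $\mathbf{q}\in\Delta_m$ converts the pointwise bound into the claimed sup-norm estimate, which combined with $n^{-\lceil k/2\rceil}\le n^{-k/2}$ delivers the stated $\lesssim_{k,m}\|f\|_{C^{k}(\Delta_m)}n^{-k/2}$ rate.
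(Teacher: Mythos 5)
The proposal takes a genuinely different route from the paper. The paper proves the lemma by induction on $k$: it Taylor-expands $f$ at $\mathbf{q}$ to order $k+1$ inside the inductive step, factors the polynomial contributions into products of derivatives of $f$ with the central-moment sums $\sum_{\bm\nu}(\bm\nu/n-\mathbf q)^{\bm\alpha} b_{n,\bm\nu}(\mathbf q)$, and closes the induction using the explicit recurrence for these moments (Lemma~\ref{lemma:order_k_induction} is not used here, but Lemma~\ref{lemma:upper_bound_alpha_central_moment} is the workhorse). You instead work at a fixed $k$ with a single Taylor split, bound the remainder via Marcinkiewicz--Zygmund moment estimates for $\hat p^{(l)}-\mathbf q$, and bound the main term by iterating two operator estimates that each trade two derivatives (resp.\ one) for $n^{-1}$ (resp.\ $n^{-1/2}$). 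Your plan, if the two working estimates hold with the Bernstein-type stability $\|B_n g\|_{C^r}\lesssim\|g\|_{C^r}$ on $\Delta_m$, would give the result; the paper's route buys a self-contained proof that only needs the central-moment recurrence, while yours buys a cleaner operator picture at the cost of a multivariate simultaneous-approximation lemma that is genuinely nontrivial on the simplex.

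Two remarks worth flagging. First, the Taylor split is redundant in your scheme: if $\|(B_n-I)g\|_{C^{r}}\lesssim n^{-1}\|g\|_{C^{r+2}}$ and $\|(B_n-I)g\|_{C^{0}}\lesssim n^{-1/2}\|g\|_{C^{1}}$ hold, you can apply them directly to $f\in C^k(\Delta_m)$, using the first $\lfloor k/2\rfloor$ times and the second once when $k$ is odd, and you obtain $\|(B_n-I)^{\lceil k/2\rceil}f\|_\infty\lesssim n^{-k/2}\|f\|_{C^k}$ without ever splitting off the remainder; indeed, your own polynomial-term calculation only uses the first estimate and gives $n^{-\lceil k/2\rceil}$, so the second estimate and the remainder analysis do no work there. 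Second, your flagged obstacle is the real one: the simultaneous-approximation bound and the $C^r$-stability of $B_n$ on the $(m-1)$-dimensional simplex are where the content lies, and they are precisely what the paper sidesteps by proving Lemma~\ref{lemma:upper_bound_alpha_central_moment} directly and then handling higher derivatives of the intermediate functions through the explicit polynomial form \eqref{eq:T_explicit_form} inside the induction. If you want to pursue your route, you need to actually prove those two operator estimates (the univariate argument via finite-difference representations of derivatives of Bernstein polynomials extends, but the affine constraint on $\Delta_m$ must be handled carefully); as written the plan assumes exactly the hard part.
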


Note that Theorem \ref{thm:discrete_case} holds for all mappings $\mathbf{g}$ that satisfy Assumption \ref{asmp:discrete_case}.
When $\mathbf{g}$
represents a mapping corresponding to the Bayes posterior distribution,
we know the exact form of $\mathbf{g(q)}$.
Hence, we can explore sampling schemes for Bayes-optimal mapping $\mathbf{g}$.

We claim that Bayes-optimal mapping $\mathbf{g}$ satisfies Assumption \ref{asmp:discrete_case}.
In fact, let $\ell_s=\ell(u_s):=P_{Y|X}(y^*|u_s)$. 
Using Bayes' theorem, the posterior probability of $X=u_s$ given $y^*$ is given by
\begin{align*}
  P_{X|Y}(u_s|y^*) = \frac{\ell_s q_s}{\sum_{j=1}^m \ell_j q_j}.
\end{align*}
In this case, $g_s(\mathbf{q}):=\ell_s q_s/\sum_{j=1}^m \ell_j q_j$ for $s=1, \cdots, m$.
Since $|\mathcal{X}|=m$ is finite, we know that there exists a constant $c_1, c_2>0$ such that $c_1\le l_j\le c_2$ for all $1\le j \le m$, which implies that $\max_{s\in[m]}\|g_s\|_{C^{2k}(\Delta_m)} \le G$ for some constant $G$ based on $k$.

Moreover, estimating $g_s(\mathbf{q})$ based on the observations of $X^n=(X_1, \cdots, X_n)$ and $y^*$ is sufficient to generate samples from the posterior distribution $P_{X|Y}(u_s|y^*)$ for $s=1, \cdots, m$. 
Since the exact form of $g_s$ is known, if we let $\widetilde{P}_{X^n}(x=u_s|y=y^*)=D_{n,k}(g_s)(\mathbf{T}/n)$ where ${\bf T}/n$ denotes the empirical of the training set, we obtain the following theorem.

\begin{theorem}\label{thm:any_order_convergence}
  Under Assumption \ref{asmp:discrete_case}, for any $k\in\mathbb{N}^{+}$, if $|\mathcal{X}|=m$ is finite, then there exists an approximate posterior $\widetilde{P}_{X^n}(x|y=y^*)$ satisfies that  for any $s\in \{1, \cdots, m\}$,
  \begin{align*}
    \E_{X^n}\left\{\widetilde{P}_{X^n}(x=u_s|y=y^*)\right\}-P_{X|Y}(u_s|y^*)&=\mathcal{O}_{k, m, G}(n^{-k}),\\
    \Var_{X^n}\left\{\widetilde{P}_{X^n}(x=u_s|y=y^*)\right\}&=\mathcal{O}_{k, m, G}(n^{-1}).
  \end{align*}
\end{theorem}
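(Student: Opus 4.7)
The theorem is essentially a corollary of Theorem~\ref{thm:discrete_case} together with a verification that the Bayes-optimal map $g_s(\mathbf{q}) = \ell_s q_s / \sum_{j=1}^m \ell_j q_j$ satisfies Assumption~\ref{asmp:discrete_case}. The plan is therefore threefold: (i) define the candidate approximate posterior as $\widetilde{P}_{X^n}(x = u_s \mid y=y^*) := D_{n,k}(g_s)(\mathbf{T}/n)$, (ii) verify that each $g_s$ has uniformly bounded $C^{2k}(\Delta_m)$-norm on the simplex $\Delta_m$, and (iii) invoke Theorem~\ref{thm:discrete_case} with this particular choice of $\mathbf{g}$.

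\textbf{Step (i).} I would start by recalling that in the discrete Bayes setting, the posterior is given componentwise by $P_{X|Y}(u_s \mid y^*) = g_s(\mathbf{q})$, so the plug-in posterior after applying the debiasing operator is $\widetilde{P}_{X^n}(x = u_s \mid y=y^*) := D_{n,k}(g_s)(\mathbf{T}/n)$. This is well-defined because $\mathbf{T}/n$ lies in $\Delta_m$ and so do all the iterated empirical measures generated by repeated applications of $B_n$.

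\textbf{Step (ii).} The key check is smoothness of $g_s$. Since $|\mathcal{X}| = m$ is finite, there exist $c_1, c_2 > 0$ with $c_1 \le \ell_j \le c_2$ for all $j \in [m]$. Consequently, for every $\mathbf{q} \in \Delta_m$, the denominator satisfies $\sum_{j=1}^m \ell_j q_j \ge c_1$, so it is bounded away from zero. Since $g_s$ is a rational function in $(q_1,\dots,q_m)$ whose denominator is bounded below by $c_1$ on $\Delta_m$ and whose numerator is linear and bounded above by $c_2$, each partial derivative of order up to $2k$ can be computed by the quotient rule and bounded uniformly in $\mathbf{q}\in\Delta_m$ by a constant depending only on $c_1, c_2, m$ and $k$. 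Summing over multi-indices of total order at most $2k$ yields $\max_{s\in[m]}\|g_s\|_{C^{2k}(\Delta_m)} \le G$ for some constant $G = G(c_1, c_2, m, k)$. This is exactly Assumption~\ref{asmp:discrete_case}.

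\textbf{Step (iii).} With Assumption~\ref{asmp:discrete_case} verified for this particular Bayes-optimal $\mathbf{g}$, Theorem~\ref{thm:discrete_case} immediately gives
\begin{align*}
    \E_{X^n}\left\{D_{n,k}(g_s)(\mathbf{T}/n)\right\} - g_s(\mathbf{q}) &= \mathcal{O}_{k,m,G}(n^{-k}),\\
    \Var_{X^n}\left\{D_{n,k}(g_s)(\mathbf{T}/n)\right\} &= \mathcal{O}_{k,m,G}(n^{-1}),
\end{align*}
which, after substituting $g_s(\mathbf{q}) = P_{X|Y}(u_s \mid y^*)$ and $\widetilde{P}_{X^n}(x=u_s\mid y=y^*) = D_{n,k}(g_s)(\mathbf{T}/n)$, is the conclusion of Theorem~\ref{thm:any_order_convergence}.

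\textbf{Main obstacle.} The one step requiring actual work is the derivative bound in Step (ii); the rest is bookkeeping. The potential pitfall is making sure that the quotient-rule derivatives of the rational function $\ell_s q_s / \sum_j \ell_j q_j$ are controlled uniformly on the whole simplex, including its boundary where some $q_j$ vanish. This is handled cleanly by the lower bound $\sum_j \ell_j q_j \ge c_1$, which holds on all of $\Delta_m$ since $\sum_j q_j = 1$ forces at least one $q_j$ to be positive and all $\ell_j \ge c_1$. Hence the denominator never degenerates, and all derivatives up to order $2k$ admit a uniform bound depending only on $c_1, c_2, m, k$, which is absorbed into $G$.
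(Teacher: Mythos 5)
Your proof is correct and takes essentially the same route as the paper: define $\widetilde{P}_{X^n}(x=u_s\mid y=y^*) := D_{n,k}(g_s)(\mathbf{T}/n)$ with $g_s$ the Bayes-optimal map, verify Assumption~\ref{asmp:discrete_case} for this $g_s$ (the paper does this verification in the text immediately preceding the theorem statement, using exactly the same argument: $\ell_j \in [c_1, c_2]$ implies the denominator $\sum_j \ell_j q_j \ge c_1$ on $\Delta_m$, so all derivatives up to order $2k$ are uniformly bounded), and then invoke Theorem~\ref{thm:discrete_case}. The one small thing the paper records that you omit is the observation that $\sum_{s=1}^m D_{n,k}(g_s)(\mathbf{T}/n)=1$ (which follows from $\sum_s g_s \equiv 1$, linearity of $B_n$, the fact that $B_n$ preserves constants, and $\sum_{j=0}^{k-1}\binom{k}{j+1}(-1)^j = 1$); this is a sanity check that the construction yields a genuine signed probability vector and is worth including, though it does not affect the bias and variance bounds you proved.
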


The proposed sampling scheme in Algorithm~\ref{alg:posterior} generates $k$ samples and a linear combination of whose distributions approximates the posterior.
In applications where it is desired to still generate one sample (rather than using a linear combination),
we may consider a rejection sampling algorithm based on Theorem \ref{thm:any_order_convergence} to sample from $\widetilde{P}_{X^n}(x|y=y^*)$. Let $\mathbf{T}=(T_1, \cdots, T_m)^\top$ where $T_j=\sumn \I_{X_i=u_j}$ for $j=1, \cdots, m$. 
Then $\big(g_1(\mathbf{T}/n), \cdots, g_m(\mathbf{T}/n)\big)^\top$ is the posterior probability vector associated with the plug-in posterior $\widehat{P}_{X^n}(x|y=y^*)$ and $\big(D_{n,k}(g_1)(\mathbf{T}/n), \cdots, D_{n,k}(g_m)(\mathbf{T}/n)\big)^\top$ is the posterior probability vector associated with the debiased posterior $\widetilde{P}_{X^n}(x|y=y^*)$. 
The rejection sampling is described in Algorithm \ref{alg:general_case_rejection}.

\begin{algorithm}[ht]
  \caption{Rejection Sampling for Debiased Posterior $\widetilde{P}_{X^n}(x \mid y = y^*)$}
  \begin{algorithmic}[1]
    \REQUIRE Plug-in posterior $\widehat{P}_{X^n}(x \mid y = y^*)$, debiased posterior $\widetilde{P}_{X^n}(x \mid y = y^*)$, large enough constant $M>0$
    \ENSURE Sample from the debiased posterior $\widetilde{P}_{X^n}(x \mid y = y^*)$
    \REPEAT
      \STATE Sample $x' \sim \widehat{P}_{X^n}(x \mid y = y^*)$
      \STATE Sample $u \sim \text{Uniform}(0, M)$
    \UNTIL{$u < \dfrac{\widetilde{P}_{X^n}(x' \mid y = y^*)}{\widehat{P}_{X^n}(x' \mid y = y^*)}$}
    \RETURN $x'$
  \end{algorithmic} \label{alg:general_case_rejection}
\end{algorithm}

In Algorithm \ref{alg:general_case_rejection}, 
\begin{align*}
M
=\max_{x\in \mathcal{X}} \frac{\widetilde{P}_{X^n}(x|y=y^*)}{\widehat{P}_{X^n}(x|y=y^*)}
=\max_j\left\{\frac{D_{n,k}(g_j)(\mathbf{T}/n)}{g_j(\mathbf{T}/n)}\right\}
\end{align*}
is the upper bound of the ratio of the debiased posterior to the plug-in posterior.

\section{Experiments}
In this section, we provide numerical experiments to illustrate the debiasing framework for posterior approximation under the binary prior case and the Gaussian mixture prior case.

\textbf{Binary prior case}. Suppose that $\mathcal{X}=\{0, 1\}$ and $X \sim {\rm Bern}(q)$ for some unknown prior $q \in (0, 1)$.
Let $\alpha = \alpha(y^*) := P_{Y|X}(y^*|1)/P_{Y|X}(y^*|0)$ be the likelihood ratio. 
Then the posterior distribution is give by $X|Y \sim {\rm Bern}\big(\alpha q /(\alpha q+1-q)\big)$. 
We estimate $g(q):=\alpha q/(\alpha q+1-q)$ based on the observations of $X^n$ and $y^*$.

Proposition \ref{prop:binary_case} provides a debiased approximation 
as a special case of Theorem \ref{thm:discrete_case} when $|\mathcal{X}|=2$.

\begin{proposition}\label{prop:binary_case}
  Let $T=\sumn X_i$. For $k=1,2,3,4$, we have
  \begin{align*}
      \E_{X^n}\left\{D_{n,k}g(T/n)\right\}-g(q)=\mathcal{O}(n^{-k}),
  \end{align*}
  where $D_{n,k}=\sum_{j=0}^{k-1} \binom{k}{j+1}(-1)^{j}B_n^j$ and $B_n(g)(x)=\sum_{k=0}^n g(\frac{k}{n}) \binom{n}{k} x^k (1-x)^{n-k}$ is the Bernstein polynomial approximation of $g$.
\end{proposition}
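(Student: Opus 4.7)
The plan is to derive Proposition \ref{prop:binary_case} as a direct instance of Lemma \ref{lemma:bernstein_approximation} (equivalently, the binary case $m=2$ of Theorem \ref{thm:discrete_case}). In the notation of Section 4.2 with $\mathcal{X}=\{0,1\}$, the prior vector is $\mathbf{q}=(q,1-q)$ and the posterior probability of $X=1$ given $y^*$ equals $g_1(\mathbf{q})=\alpha q/(\alpha q+1-q)=:g(q)$; since the empirical distribution $\hat{\pi}_{X^n}$ is pinned down by the scalar $T/n$, the operator $B_n$ reduces to the Bernstein polynomial operator acting on a single-variable function, as stated in the proposition. To apply Lemma \ref{lemma:bernstein_approximation} I first need to verify the smoothness hypothesis, which is easy here: the denominator $1+(\alpha-1)q$ is a positive linear function on $[0,1]$ bounded below by $\min(1,\alpha)>0$, so $g\in C^{\infty}([0,1])$ and $\|g\|_{C^{2k}([0,1])}$ is finite for every integer $k$.

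Next I would record the algebraic identity that recasts the bias of $D_{n,k}g(T/n)$ as an iterated operator applied to $g$. By the tower property, $\E_{X^n}[B_n^{j}g(\hat{\pi})]=B_n^{j+1}g(q)$, so
\begin{align*}
\E_{X^n}\{D_{n,k}g(T/n)\}=\sum_{j=1}^{k}\binom{k}{j}(-1)^{j-1}B_n^{j}g(q).
\end{align*}
The binomial expansion $(I-B_n)^{k}=I-\sum_{j=1}^{k}\binom{k}{j}(-1)^{j-1}B_n^{j}$ therefore gives
\begin{align*}
\E_{X^n}\{D_{n,k}g(T/n)\}-g(q)=-(I-B_n)^{k}g(q).
\end{align*}

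Finally I would invoke Lemma \ref{lemma:bernstein_approximation} with its internal parameter $k$ replaced by $2k$, so that $\lceil 2k/2\rceil=k$, obtaining
\begin{align*}
\bigl\|(I-B_n)^{k}g\bigr\|_{\infty}=\bigl\|(B_n-I)^{k}g\bigr\|_{\infty}\lesssim_{k}\|g\|_{C^{2k}([0,1])}\,n^{-k}=\mathcal{O}(n^{-k}),
\end{align*}
which yields the claimed bias rate pointwise at $q$. The restriction to $k\in\{1,2,3,4\}$ in the proposition simply matches the experimental range; the same proof works for every positive integer $k$. The only substantive step is the binomial identity linking $D_{n,k}$ and $(I-B_n)^{k}$, which is purely algebraic, so no real obstacle is expected beyond invoking Lemma \ref{lemma:bernstein_approximation} as a black box.
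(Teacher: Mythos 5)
Your proof is correct and takes essentially the same route as the paper: Proposition \ref{prop:binary_case} is indeed obtained as the $m=2$ instance of Theorem \ref{thm:discrete_case} (equivalently Lemma \ref{lemma:bernstein_approximation}), after verifying the $C^{2k}$ smoothness of $g$ and using the identity $\E_{X^n}\{D_{n,k}g(T/n)\}=C_{n,k}g(q)$, i.e.\ $\E_{X^n}\{D_{n,k}g(T/n)\}-g(q)=-(I-B_n)^k g(q)$. The paper states this same reduction, so there is no material difference in approach.
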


In the proof of Theorem \ref{thm:discrete_case}, we notice that for any $k\in \mathbb{N}^{+}$,
$
    \E_{X^n}\left\{D_{n,k}g(T/n)\right\}=C_{n,k}g(q),
$
which allows Proposition \ref{prop:binary_case} to be verified in closed form.
To validate this result numerically, we consider two parameter settings: in the first experiment we set $q=0.4$, $y^*=2$, and $Y|X \sim \mathcal{N}(X,1)$, while in the second we set $q=3/11$, $y^*=1$, and $Y|X \sim \mathcal{N}(X,1/4)$.

For both settings, we examine the convergence rate of the debiased estimators $D_{n,k}g(T/n)$ for $k=1,2,3,4$. The results are shown in log-log plots in Figure \ref{fig:convergence_rate_binary_case}, where the vertical axis represents the logarithm of the absolute error and the horizontal axis represents the logarithm of the sample size $n$. Reference lines with slopes corresponding to $n^{-1}, n^{-2}, n^{-3},$ and $n^{-4}$ are included for comparison.

\begin{figure}[t]
    \centering
    \begin{subfigure}{0.45\textwidth}
        \centering
        \includegraphics[width=\textwidth]{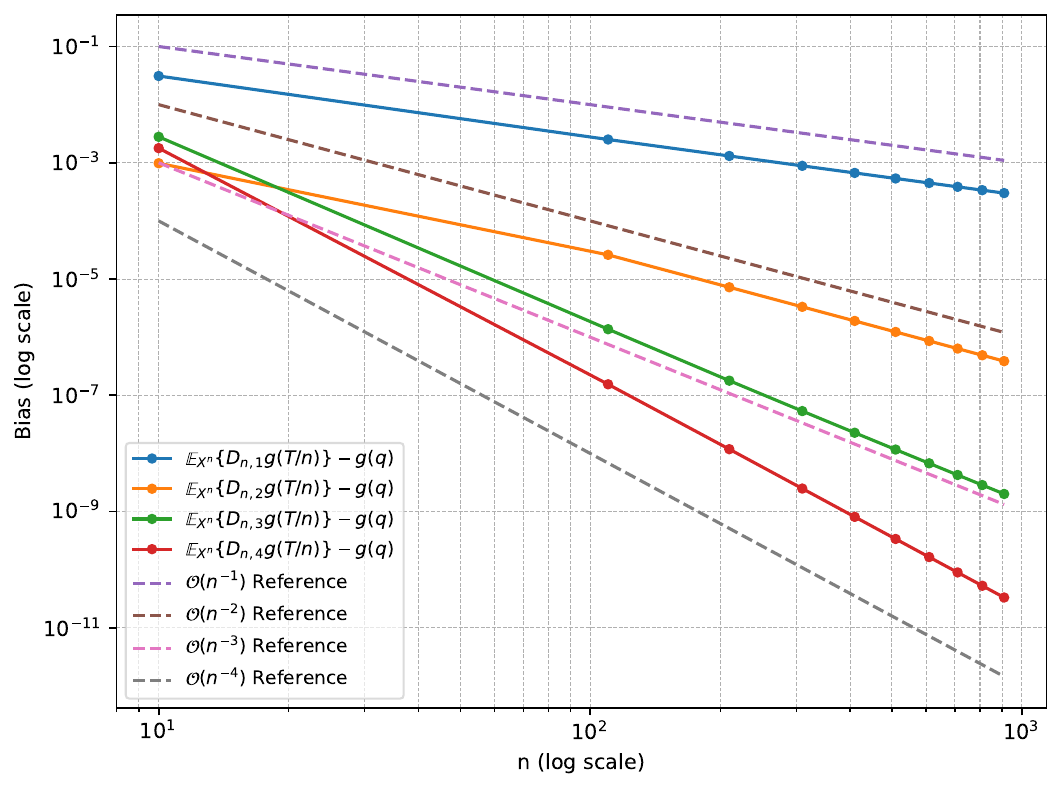}
        \caption{$q=0.4, y^*=2, Y|X \sim \mathcal{N}(X, 1)$}    
    \end{subfigure}
    \begin{subfigure}{0.45\textwidth}
        \centering
        \includegraphics[width=\textwidth]{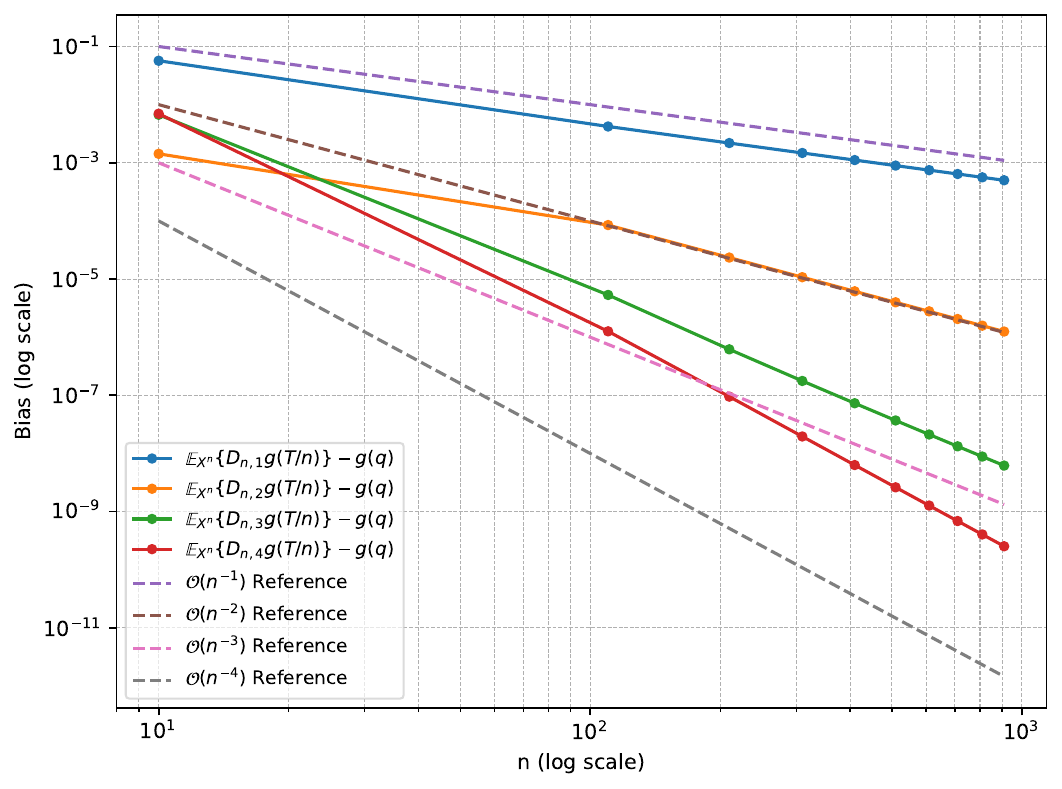}
        \caption{$q=3/11, y^*=1, Y|X \sim \mathcal{N}(X, 1/4)$}    
    \end{subfigure}
    \caption{Convergence of plug-in and debiased estimators in the binary prior case. The plot compares the approximation error of $D_{n,k}g(T/n)$ ($k=1,2,3,4$) against $n$. Reference lines with slopes corresponding to $n^{-1}, n^{-2}, n^{-3},$ and $n^{-4}$ are included to highlight the convergence rates.}
  \label{fig:convergence_rate_binary_case}
\end{figure}

\textbf{Gaussian mixture prior case}.
Suppose that $X\sim \frac{1}{2}\mathcal{N}(0,1)+\frac{1}{2}\mathcal{N}(1,1)$ and $Y=X+\xi$ where $\xi\sim \mathcal{N}(0,1/16)$.
Additionally, let $y^*=0.8$ and $A=\{x:x\ge 0.5\}$. 
In this case, we validate the theoretical convergence rate 
\begin{align*}
    \left|\E_{X^n}\left\{D_{n,k}f(\hat{\pi})(A)\right\}-f(\pi)(A)\right|&= \mathcal{O}(n^{-k}).
\end{align*}

Since $\E_{X^n}\{D_{n,k}f(\hat{\pi})(A)\}$ does not have a closed-form expression, we approximate it using Monte Carlo simulation.
To ensure that the Monte Carlo error is negligible compared to the bias $\mathcal{O}(n^{-k})$, we select the number of Monte Carlo samples $N$ such that $N\gg n^{2k-1}$.
In practice, we run simulations for $k=1$ and $k=2$ and set $N=n^3$ for $k=1$ and $N=n^4$ for $k=2$.

\begin{figure}[t]
    \centering
    \begin{subfigure}{0.45\textwidth}
        \centering
        \includegraphics[width=\textwidth]{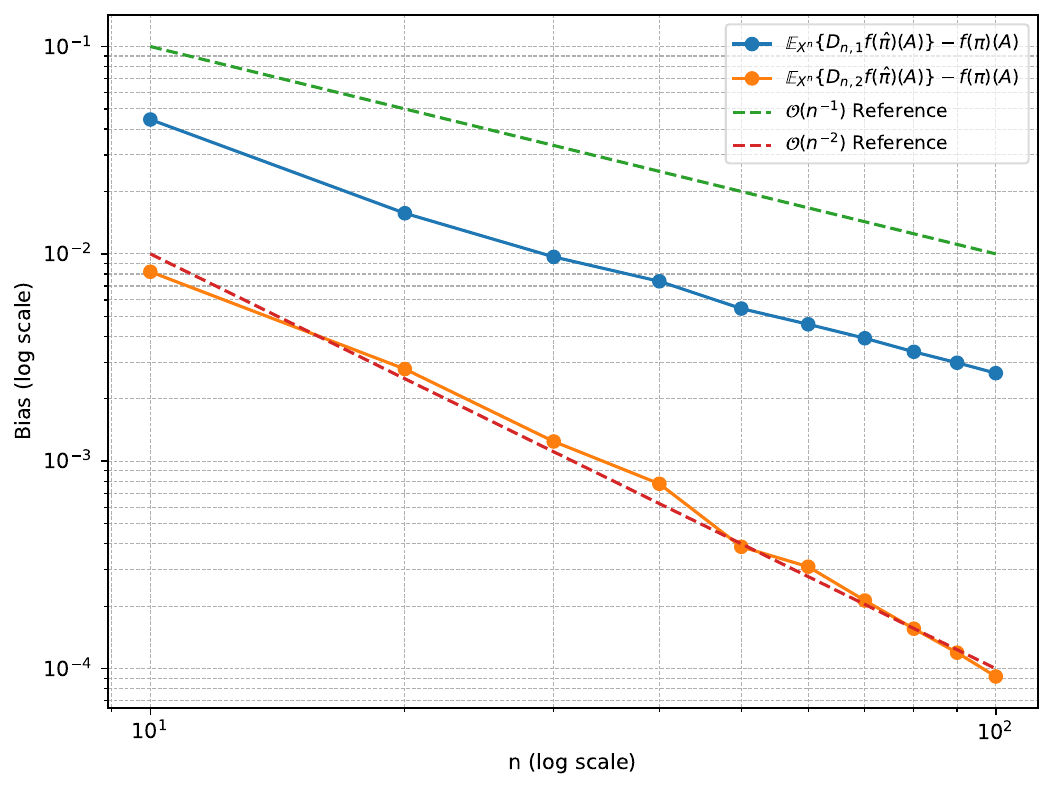}
        \caption{Bias convergence rate}    
    \end{subfigure}
    \begin{subfigure}{0.45\textwidth}
        \centering
        \includegraphics[width=\textwidth]{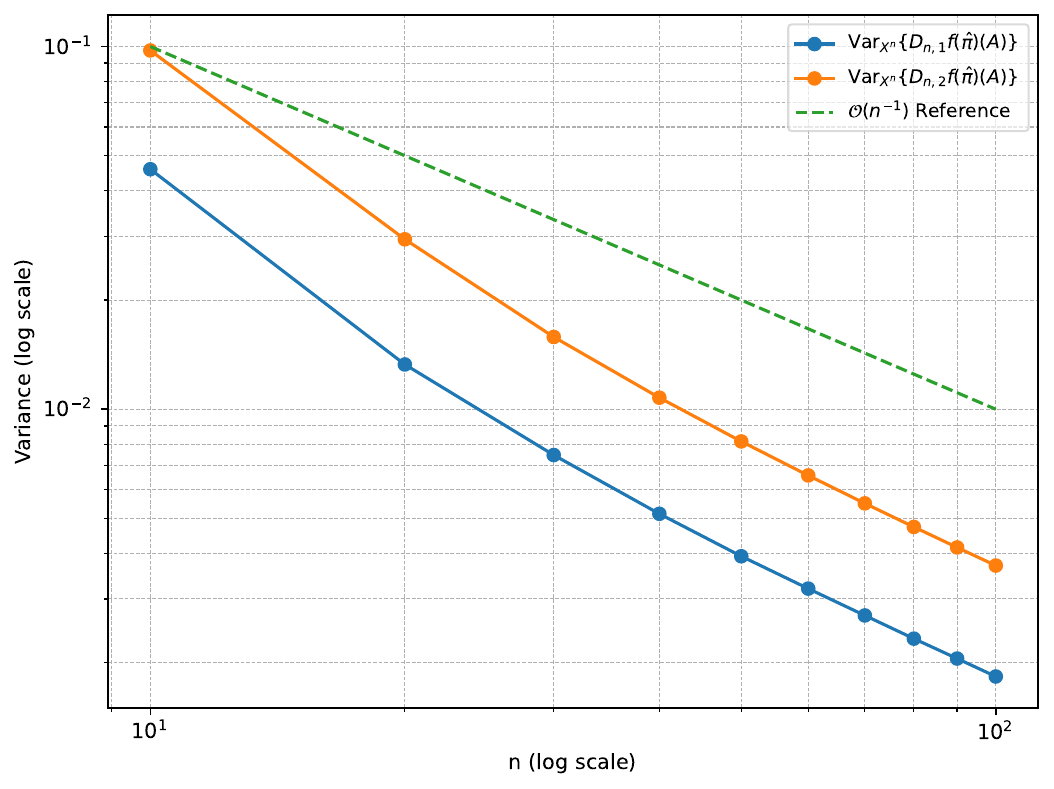}
        \caption{Variance convergence rate}    
    \end{subfigure}
    \caption{Convergence of debiased estimators in the Gaussian mixture prior case with $X \sim \frac{1}{2}\mathcal{N}(0,1)+\frac{1}{2}\mathcal{N}(1,1)$, $Y=X+\xi$, $\xi \sim \mathcal{N}(0,1/16)$, $y^*=0.8$, and $A=\{x: x\ge0.5\}$. (a) shows the bias decay of $D_{n,k}f(\hat{\pi})(A)$ for $k=1,2$, with reference lines of slopes corresponding to $n^{-1}$ and $n^{-2}$ included for comparison. (b) shows the corresponding variance decay, alongside a reference slope corresponding to $n^{-1}$. 
    }
  \label{fig:mixture_convergence}
\end{figure}

The results are shown in Figure \ref{fig:mixture_convergence}. The figure presents log-log plots where the vertical axis represents the logarithm of the absolute error or of the variance and the horizontal axis represents the logarithm of the sample size $n$. For both $k=1$ and $k=2$, the observed convergence rates align closely with the theoretical predictions.

\section{Conclusion}
We introduced a general framework for constructing a debiased posterior approximation through observed samples $D$ and the known likelihood $P_{Y|X}$ when the prior distribution is unknown.
Here,  
a naive strategy that directly plugs the empirical distribution into the Bayes formula or a generative model has a bias,
because the likelihood is nonconstant, inducing a distribution shift, and the map from the prior to posterior is nonlinear.
It can be shown that the plug-in approach generates $\hat{X}$ with bias $\|\mathbb{E}_D(P_{\hat{X}|Y=y^*,D})-P_{X|Y=y^*}\|_{\rm TV}=\mathcal{O}(n^{-1})$ and variance $\Var_D(P_{\hat{X}|Y=y^*,D})=\mathcal{O}(n^{-1})$. 
In contrast, our proposed debiasing framework achieves arbitrarily high-order bias rate of $\mathcal{O}(n^{-k})$ for any integer $k$, while maintaining the order of magnitude of the variance. 
Our framework is black-box in the sense that we only need to resample the training data and feed it into a given black-box conditional generative model.
In particular, we provide a rigorous proof for the Bayes-optimal sampler $f$ under the continuous prior setting and for a broad class of samplers $f$ with a continuous $2k$-th derivative, as specified in Assumption \ref{asmp:discrete_case}, under the discrete prior setting.
We expect that the proposed debiasing framework could work for general $f$ and will support future developments in bias-corrected posterior estimation and conditional sampling.

\newpage

\section*{Acknowledgments}
This research was supported in part by NSF Grant DMS-2515510.

\bibliographystyle{plainnat}
\bibliography{yourbibfile}


\newpage
\appendix

\section{Proof of Theorem \ref{thm:polynomial_approximation_bias}}
\begin{proof} [Proof of Theorem \ref{thm:polynomial_approximation_bias}]

We begin by introducing notations that facilitates the analysis. Define
\begin{align*}
    \mu=\int_{\mathcal{X}} \ell(x)\pi(dx), \quad \mu_A=\int_{A} \ell(x)\pi(dx).
\end{align*}

Let $\hat{\pi}^{(0)}=\pi, \hat{\pi}^{(1)}=\hat{\pi}$ and set $(X_1^{(0)}, \ldots, X_n^{(0)})\equiv (X_1, \ldots, X_n)$. 
For $j=1, \ldots, k$, we define $\hat{\pi}^{(j)}$ as the empirical measure of $n$ i.i.d.~samples $(X_1^{(j-1)}, \ldots, X_n^{(j-1)})$ drawn from the measure $\hat{\pi}^{(j-1)}$. Furthermore, for each $j=0, \ldots, k$, define 
  \begin{align*}
      e_n^{(j)}=n^{-1}\sumn \left\{\ell(X_i^{(j)})-\mu\right\}, \quad \mu_A^{(j)}=n^{-1}\sum_{i=1}^n\ell(X_i^{(j)})\delta_{X_i^{(j)}}(A).
  \end{align*}
Let
\begin{align*}
C_{n,k}=\sum_{j=1}^k \binom{k}{j}(-1)^{j-1}B_n^j,
\end{align*}
so that it suffices to show that
\begin{align}\label{eqn:polynomial_approximation_bias}
C_{n,k}f(\pi)(A)-f(\pi)(A)&=\mathcal{O}_{L_1, L_2, k}(n^{-k})
\end{align}
since $B_n D_{n,k}=C_{n,k}$.

The Radon-Nikodym derivative of $f(\pi)$ with respect to $\pi$ is
\begin{align*}
    \frac{d f(\pi)}{d\pi}(x) = \frac{\ell(x)}{\displaystyle\int_\mathcal{X} \ell(x)\pi(dx)}.
\end{align*}

For the empirical measure $\hat{\pi}$, the corresponding Radon-Nikodym derivative of $f(\hat{\pi})$ with respect to $\hat{\pi}$ takes the form
\begin{align*}
    \frac{d f(\hat{\pi})}{d\hat{\pi}}(x) &= 
    \begin{cases}
        \dfrac{\ell(x)}{\displaystyle\int_\mathcal{X} \ell(x)\hat{\pi}(dx)}, & \text{if } x\in \left\{X_1^{(0)}, \ldots, X_n^{(0)}\right\},\\
        0, & \text{otherwise},
    \end{cases}\\ 
    &= 
    \begin{cases}
      \dfrac{\ell(x)}{n^{-1}\sumn \ell(X_i^{(0)})}, & \text{if } x\in \left\{X_1^{(0)}, \ldots,X_n^{(0)}\right\},\\
      0, & \text{otherwise}.
  \end{cases}
\end{align*}

Consequently,
\begin{align*}
  B_nf(\pi)(A) &= \E\left\{f(\hat{\pi})(A)\right\}\\
  &=\E\left\{\int_A \frac{d f(\hat{\pi})}{d\hat{\pi}}(x) \hat{\pi}(dx)\right\}\\
  &=\E\left\{\int_A \frac{\ell(x)}{n^{-1}\sumn l(X_i^{(0)})}\hat{\pi}(dx)\right\}\\
  &=\E\left\{\frac{n^{-1}\sumn \ell(X_i^{(0)})\delta_{X_i^{(0)}}(A)}{n^{-1}\sumn \ell(X_i^{(0)})}\right\}.
\end{align*}

Moreover, by the definition of $B_n$ and iterated conditioning, we have $\E \left\{f(\hat{\pi}^{(j)})(A)\right\}=\E\left[\E\left\{f(\hat{\pi}^{(j)})(A)| \hat{\pi}^{(j-1)}\right\}\right]=\E\left\{B_nf(\hat{\pi}^{(j-1)})(A)\right\}=\cdots = \E\left\{B_n^{j-1}f(\hat{\pi}^{(1)})(A)\right\}=B_n^{j}f(\pi)(A)$.

By the same logic, for any $j=2,\ldots,k$, we have
\begin{align*}
  \E \left\{f(\hat{\pi}^{(j)})(A)\right\}&=\E\left\{\frac{n^{-1}\sumn \ell(X_i^{(j-1)})\delta_{X_i^{(j-1)}}(A)}{n^{-1}\sumn \ell(X_i^{(j-1)})}\right\}.
\end{align*}

Thus,
\begin{align*}
  C_{n,k}f(\pi)(A)&=\sum_{j=1}^k \binom{k}{j}(-1)^{j-1}B_n^j f(\pi)(A)\\
  &=\sum_{j=1}^k \binom{k}{j}(-1)^{j-1}\E \left\{f(\hat{\pi}^{(j)})(A)\right\}\\
  &=\sum_{j=1}^k \binom{k}{j}(-1)^{j-1}\E\left\{\frac{n^{-1}\sumn \ell(X_i^{(j-1)})\delta_{X_i^{(j-1)}}(A)}{n^{-1}\sumn \ell(X_i^{(j-1)})}\right\}\\ 
  &=\sum_{j=1}^k \binom{k}{j}(-1)^{j-1}\E\left(\frac{\mu_A^{(j-1)}}{e_n^{(j-1)}+\mu}\right).
\end{align*}

Then \eqref{eqn:polynomial_approximation_bias} holds if 
\begin{align}\label{eqn:polynomial_approximation_bias_claim}
  \sum_{j=1}^k \binom{k}{j}(-1)^{j-1}\E\left(\frac{\mu_A^{(j-1)}}{e_n^{(j-1)}+\mu}\right)=\frac{\mu_A}{\mu}+\mathcal{O}_{L_1, L_2, k}(n^{-k}).
\end{align}

Now we show that \eqref{eqn:polynomial_approximation_bias_claim} holds. 
By using the Taylor expansion of $1/(e_n^{(j-1)}+\mu)$,
we have 
\begin{align*}
    \frac{1}{e_n^{(j-1)}+\mu}=\frac{1}{\mu}+\sum_{r=1}^{2k-1}\frac{(-1)^r}{\mu^{r+1}}(e_n^{(j-1)})^r+\frac{(e_n^{(j-1)})^{2k}}{\xi^{2k+1}},
\end{align*}
where $\xi$ lies between $e_n^{(j-1)}+\mu$ and $\mu$.

Since $\min\{e_n^{(j-1)}+\mu, \mu\}\ge L_1$, we have $1/\xi^{2k+1}\le L_1^{-2k-1}$.
Additionally, 
$L_1\le l(X_i^{(j-1)})\le L_2$ and Hoeffding's inequality implies that
\begin{align*}
    \mathbb{P}(|ne_n^{(j-1)}|>t)\le2\exp\left\{-\frac{2t^2}{n(L_2-L_1)^2}\right\}
\end{align*}
for all $t>0$, which is equivalent to
\begin{align*}
    \mathbb{P}(|e_n^{(j-1)}|>t)\le2\exp\left\{-\frac{2nt^2}{(L_2-L_1)^2}\right\}.
\end{align*}
Then 
\begin{align*}
    \E\left(|e_n^{(j-1)}|^{2k}\right) &= \int_0^\infty\mathbb{P}\left(\left|e_n^{(j-1)}\right|^{2k}>t\right)dt\\
    & = \int_0^\infty\mathbb{P}\left(\left|e_n^{(j-1)}\right|>t^{1/{2k}}\right)dt\\
    &\le\int_0^\infty 2ku^{k-1}\exp\left\{-\frac{2nu}{(L_2-L_1)^2}\right\}du \\
    & = 2kn^{-k}\int_0^\infty\exp\left\{-\frac{2v}{(L_2-L_1)^2}\right\}v^{k-1}dv\\
    & = \mathcal{O}_{L_1, L_2, k}(n^{-k}).
\end{align*}

Therefore, we have
\begin{align*}
    \E\left(\frac{\mu_A^{(j-1)}}{e_n^{(j-1)}+\mu}\right)=\frac{\mu_A}{\mu}+\E\left\{\sum_{r=1}^{2k-1}\frac{(-1)^r}{\mu^{r+1}}\mu_A^{(j-1)}(e_n^{(j-1)})^r\right\}+\mathcal{O}_{L_1, L_2, k}(n^{-k}),
\end{align*}
which implies that
the {\rm L.H.S.} of \eqref{eqn:polynomial_approximation_bias_claim} can be written as
\begin{align*}
    \frac{\mu_A}{\mu}+\sum_{r=1}^{2k-1}\frac{(-1)^r}{\mu^{r+1}}\left[\sum_{j=1}^{k}\binom{k}{j}(-1)^{j-1}\E\left\{\mu_A^{(j-1)}(e_n^{(j-1)})^{r}\right\}\right]
   +\mathcal{O}_{L_1, L_2, k}(n^{-k}).
\end{align*}

Thus to prove the bound on the bias, it remains to show that for any given $k$ and $1 \le r \le 2k-1$,
\begin{align*}
  B_{k,r}:=\sum_{j=1}^{k}\binom{k}{j}(-1)^{j-1}\E\left\{\mu_A^{(j-1)}(e_n^{(j-1)})^{r}\right\}=\mathcal{O}_{L_1, L_2, k}(n^{-k}).
\end{align*}

Define a new operator $B:h\mapsto \E\{h(\hat{\pi})\}$ for any $h: \Delta_{\mathcal{X}}\rightarrow \mathbb{R}$ and let 
\begin{align*}
    h_s(\pi)=\left\{\int_A\ell(x)\pi(dx)\right\}\left\{\int \ell(x)\pi(dx)\right\}^s.
\end{align*}

Since $B^{j}h_s(\pi)=\E\left\{h_s(\hat{\pi}^{(j)})\right\}$,
we have
\begin{align*}
    B_{k,r} &= \sum_{j=1}^k\binom{k}{j}(-1)^{j-1}\sum_{s=0}^r\binom{r}{s}B^jh_s(\pi)(-1)^{(r-s)}\mu^{r-s}\\
    &=\sum_{s=0}^r \binom{r}{s}(-1)^{r-s}\mu^{r-s}\sum_{j=1}^k\binom{k}{j}(-1)^{j-1}B^jh_s(\pi).
\end{align*}

We claim that $B_{k,r}=\mathcal{O}_{L_1, L_2, k}(n^{-k})$ holds if for any $0\le s\le r\le 2k-1$,
\begin{align}
    (I-B)^k h_s(\pi)=\mathcal{O}_{L_1, L_2, s}(n^{-k}).\label{eq:thm1_last_step}
\end{align}
Indeed, $(I-B)^k h_s(\pi)=\mathcal{O}_{L_1, L_2, s}(n^{-k})$ is equivalent to $\sum_{j=1}^k\binom{k}{j}(-1)^{j-1}B^jh_s(\pi)=h_s(\pi)+\mathcal{O}_{L_1, L_2, s}(n^{-k})$.
Therefore \eqref{eq:thm1_last_step} implies
\begin{align*}
    B_{k,r} &=  \sum_{s=0}^r \binom{r}{s}(-1)^{r-s}\mu^{r-s}\left\{h_s(\pi)+\mathcal{O}_{L_1, L_2, s}(n^{-k})\right\}\\
    &=\sum_{s=0}^r \left\{\binom{r}{s}(-1)^{r-s}\mu_A\mu^{r}+\mathcal{O}_{L_1, L_2, s}(n^{-k})\right\}\\
    &=\mathcal{O}_{L_1, L_2, k}(n^{-k}).
\end{align*}

Now, to prove the bound on the bias we only need to show that \eqref{eq:thm1_last_step} holds.
For any $k\in\mathbb{N}$ and $s\in\mathbb{N}^{+}$, let 
    \begin{align*}
        \mathfrak{J}_s :=
        \left\{
            (\mathbf{a}, \mathbf{s}, v)
            \colon\,
            \mathbf{a}=(a_1, a_2, \ldots), \mathbf{s}=(s_1, s_2, \ldots),
            a_i, s_i\in\mathbb{N}^+, v\in\mathbb{N}, a_1>a_2>\cdots\ge 1,
            \sum_i a_is_i + v =s 
        \right\}
    \end{align*}
    and
    \begin{align*}
        \mathcal{A}_s^k := \left\{
        \sum_{(\mathbf{a}, \mathbf{s},v)\in\mathfrak{J}_s} \alpha_{\mathbf{a}, \mathbf{s}, v}\left\{\int_A \ell^{v}(x)\pi(dx) \right\} \prod_{i} \left\{ \int \ell^{a_i}(x) \, \pi(dx) \right\}^{s_i}
        \colon\,
        |\alpha_{\mathbf{a}, \mathbf{s}, v}| \le C_{k}(s) n^{-k}
        \right\},
    \end{align*}
where $C_0(s),C_1(s),\ldots$ are constants from Lemma \ref{lemma:order_k_induction}.
Since $h_s(\pi)\in \mathcal{A}_{s+1}^0$, 
Lemma \ref{lemma:order_k_induction} implies that $(I-B)^k h_s(\pi)\in \mathcal{A}_{s+1}^k$. 
Therefore, $(I-B)^k h_s(\pi)=\mathcal{O}_{L_1, L_2, s}(n^{-k})$, finishing the proof for the bias bound.

Finally, to prove the bound on the variance, consider the function $F(x,y)=x/y$. By construction, 
\begin{align*}
    f(\hat{\pi}^{(j)})(A)=F\big(\mu_{A}^{(j-1)}, e_n^{(j-1)}+\mu\big).
\end{align*}

Applying the Taylor expansion of $F(x,y)$ yields
\begin{align*}
    f(\hat{\pi}^{(j)})(A) = f(\pi)(A)+\frac{1}{\mu_A}(\mu_{A}^{(j-1)}-\mu_A)-\frac{\mu_A}{\mu^2}e_n^{(j-1)}-\frac{1}{\xi_y^2}(\mu_{A}^{(j-1)}-\mu_A)e_n^{(j-1)}+\frac{\xi_x}{\xi_y^3} \left(e_n^{(j-1)}\right)^2,
\end{align*}
for some $\xi_x$ lying between $\mu_A$ and $\mu_{A}^{(j-1)}$, and $\xi_y$ lying between $\mu$ and $e_n^{(j-1)}+\mu$.
Since $L_1\le\mu_A, \mu_A^{(j-1)},\mu, e_n^{(j-1)}+\mu\le L_2$
implies that $|1/\xi_y^2|$ and $|\xi_x/\xi_y^3|$ are bounded by some constant depending on $L_1$ and $L_2$.

Moreover, since 
\begin{align*}
    \left|\E\left\{(\mu_{A}^{(j-1)}-\mu_A)e_n^{(j-1)}\right\}\right|
    &=\left|\Cov \left(\mu_{A}^{(j-1)}, e_n^{(j-1)}+\mu \right)\right|\\
    &\le \left\{\Var \left(\mu_{A}^{(j-1)}\right)\right\}^{1/2}
    \left\{\Var \left(e_n^{(j-1)}+\mu\right)\right\}^{1/2}\\
    &=\left[\Var \left\{n^{-1}\sum_{i=1}^n\ell(X_i^{(j)})\delta_{X_i^{(j-1)}}(A)\right\}\right]^{1/2}
    \left[\Var \left\{n^{-1}\sumn \ell(X_i^{(j-1)})\right\}\right]^{1/2}\\
    &= \left[\frac{1}{n}\Var \left\{\ell(X_i^{(j)})\delta_{X_i^{(j-1)}}(A)  \right\}\right]^{1/2}
    \left[\frac{1}{n}\Var\left\{\ell(X_i^{(j-1)})\right\}\right]^{1/2}\\
    & = \mathcal{O}_{L_1, L_2}(n^{-1}),
\end{align*}
and 
\begin{align*}
    \E\left(e_n^{(j-1)}\right)^2=\frac{1}{n}\Var\left\{\ell(X_i^{(j-1)})\right\}=\mathcal{O}_{L_1, L_2}(n^{-1}).
\end{align*}

Combining these bounds with the Taylor expansion, we conclude that for any $j\ge 1$, 
\begin{align*}
    B_n^{j}f(\pi)(A)=\E\left\{f(\hat{\pi}^{(j)})(A)\right\}=f(\pi)(A)+\mathcal{O}_{L_1, L_2}(n^{-1}).
\end{align*}

By the same logic, we also have $B_n\left\{ f(\pi)(A)\right\}^2=\left\{f(\pi)(A)\right\}^2+\mathcal{O}_{L_1, L_2}(n^{-1})$.

Therefore,
\begin{align*}
    D_{n,k}f(\pi)(A)&=
    \sum_{j=0}^{k-1} \binom{k}{j+1}(-1)^{j}B_n^jf(\pi)(A)\\
  &=\sum_{j=0}^{k-1} \binom{k}{j+1}(-1)^{j}\left\{f(\pi)(A)+\mathcal{O}_{L_1, L_2}(n^{-1})\right\}\\
  &=f(\pi)(A)+\mathcal{O}_{L_1, L_2, k}(n^{-1}),
\end{align*}
and
\begin{align*}
    \Var_{X^n}\left\{D_{n,k}f(\hat{\pi})(A)\right\}
    &=\E\left[
    \left\{D_{n,k}f(\hat{\pi})(A)\right\}^2
    \right]-
    \left[
    \E\left\{D_{n,k}f(\hat{\pi})(A)\right\}
    \right]^2\\
    &=B_n\left\{D_{n,k}f(\pi)(A)\right\}^2-\left\{f(\pi)(A)+\mathcal{O}_{L_1,L_2, k}(n^{-k})\right\}^2\\
    &=B_n \left\{f(\pi)(A)+\mathcal{O}_{L_1, L_2, k}(n^{-1})\right\}^2-\left\{f(\pi)(A)+\mathcal{O}_{L_1,L_2, k}(n^{-k})\right\}^2\\
    &=B_n \left\{f(\pi)(A)\right\}^2+\mathcal{O}_{L_1, L_2, k}(n^{-1})-\left\{f(\pi)(A)\right\}^2\\
    &=\mathcal{O}_{L_1, L_2, k}(n^{-1}).
\end{align*}

\end{proof}

\begin{lemma}\label{lemma:order_k_induction}
There exist constants $C_0(s)$, $C_1(s)$, $C_2(s), \ldots,$ such that the following holds.

For any $k\in\mathbb{N}$ and $s, n\in\mathbb{N}^{+}$, let
    \begin{align*}
        \mathfrak{J}_s :=
        \left\{
            (\mathbf{a}, \mathbf{s}, v)
            \colon\,
            \mathbf{a}=(a_1, a_2, \ldots), \mathbf{s}=(s_1, s_2, \ldots),
            a_i, s_i\in\mathbb{N}^+, v\in\mathbb{N}, a_1>a_2>\cdots\ge 1,
            \sum_i a_is_i + v =s 
        \right\}
    \end{align*}
    and
    \begin{align*}
        \mathcal{A}_s^k := \left\{
        \sum_{(\mathbf{a}, \mathbf{s},v)\in\mathfrak{J}_s} \alpha_{\mathbf{a}, \mathbf{s}, v}\left\{\int_A \ell^{v}(x)\pi(dx) \right\} \prod_{i} \left\{ \int \ell^{a_i}(x) \, \pi(dx) \right\}^{s_i}
        \colon\,
        |\alpha_{\mathbf{a}, \mathbf{s}, v}| \le C_{k}(s) n^{-k}
        \right\}.
    \end{align*}
    If $h(\pi)\in \mathcal{A}_s^0$, then for any $k\in\mathbb{N}$, we have
    \begin{align}
        (I-B)^kh(\pi)\in \mathcal{A}_s^k,\label{eq:k_iteration}
    \end{align}
    where $B$ is an operator defined as $Bh(\pi)=\E\{h(\hat{\pi})\}$ where $\hat{\pi}$ is the empirical distribution of $X_1,X_2, \ldots, X_n\overset{{\rm i.i.d.}}{\sim} \pi$.
\end{lemma}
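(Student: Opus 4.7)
The plan is to proceed by induction on $k$. The base case $k=0$ is exactly the assumption $h(\pi)\in\mathcal{A}_s^0$. For the inductive step, using $(I-B)^{k+1}h=(I-B)\bigl[(I-B)^k h\bigr]$ together with the linearity of $B$, the argument reduces to a one-step building block: for every $(\mathbf{a},\mathbf{s},v)\in\mathfrak{J}_s$, writing
\[
P_{\mathbf{a},\mathbf{s},v}(\pi):=\Bigl\{\int_A\ell^{v}(x)\,\pi(dx)\Bigr\}\prod_i\Bigl\{\int\ell^{a_i}(x)\,\pi(dx)\Bigr\}^{s_i},
\]
I would show that $(I-B)P_{\mathbf{a},\mathbf{s},v}$ is a finite linear combination of generators $P_{\mathbf{a}',\mathbf{s}',v'}$ with $(\mathbf{a}',\mathbf{s}',v')\in\mathfrak{J}_s$ whose coefficients are bounded in absolute value by $C'(s)\,n^{-1}$ for some $C'(s)$ depending only on $s$. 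Applied term-by-term to the $C_k(s)\,n^{-k}$-bounded expansion of $(I-B)^k h$ (which has only finitely many generators, one per element of $\mathfrak{J}_s$), this will yield $(I-B)^{k+1}h\in\mathcal{A}_s^{k+1}$ with the recursive choice $C_{k+1}(s):=|\mathfrak{J}_s|\,C'(s)\,C_k(s)$, which depends only on $s$.

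The heart of the argument is a set-partition expansion of $BP_{\mathbf{a},\mathbf{s},v}(\pi)=\E\bigl[P_{\mathbf{a},\mathbf{s},v}(\hat\pi)\bigr]$. Setting $S:=1+\sum_i s_i$, I would distribute all products over the $S$ empirical averages coming from $\int_A\ell^v\,d\hat\pi$ and the $\bigl\{\int\ell^{a_i}\,d\hat\pi\bigr\}^{s_i}$, yielding $n^{-S}$ times a sum over $S$-tuples of indices in $\{1,\dots,n\}$. Then I would organize the sum by the equivalence pattern on positions (two positions are equivalent iff their indices coincide). For a partition with $r$ parts the number of compatible index assignments is $n(n-1)\cdots(n-r+1)$, and by the i.i.d.\ property the expectation factors as a product over parts: each pure-$\eta$ part contributes $\int\ell^{\beta}\pi$ and the unique part containing the position coming from the $\int_A\ell^v$-factor contributes $\int_A\ell^{v+\beta'}\pi$, where $\beta,\beta'$ are sums of the original $a_i$'s assigned to that part. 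Hence $BP_{\mathbf{a},\mathbf{s},v}(\pi)$ is a finite linear combination of generators $P_{\mathbf{a}',\mathbf{s}',v'}$, and the identity $v'+\sum_i a_i's_i'=v+\sum_i a_is_i=s$ (total exponent is preserved by merging) guarantees $(\mathbf{a}',\mathbf{s}',v')\in\mathfrak{J}_s$ after sorting the distinct new exponents into the canonical strictly decreasing sequence.

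The coefficient attached to a partition with $r$ parts equals $n^{-S}\,n(n-1)\cdots(n-r+1)$ times a combinatorial multiplicity. The all-distinct partition ($r=S$) recovers $P_{\mathbf{a},\mathbf{s},v}$ with the prefactor $\prod_{j=0}^{S-1}(1-j/n)=1+O_S(n^{-1})$, while every coarser partition ($r<S$) carries the prefactor $n^{-(S-r)}(1+O(n^{-1}))=O_S(n^{-1})$. Subtracting $P_{\mathbf{a},\mathbf{s},v}$ removes the leading $1$ and leaves a finite sum of $\mathcal{A}_s^0$-generators with coefficients of order $n^{-1}$, whose implicit constant depends only on $S\le s+1$ and on the (finitely many) set partitions of $\{1,\ldots,S\}$, i.e.\ only on $s$. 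This establishes the one-step building block and closes the induction.

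The main obstacle I foresee is purely combinatorial bookkeeping: making explicit how the merged exponents across a part regroup into a canonical $(\mathbf{a}',\mathbf{s}',v')$ with strictly decreasing $a_i'$, and verifying that the collection of $(\mathbf{a}',\mathbf{s}',v')\in\mathfrak{J}_s$ produced by the expansion (and the combinatorial multiplicities) is bounded by a function of $s$ alone. Once the partition expansion is set up, the rate estimates themselves are immediate from the falling-factorial identity.
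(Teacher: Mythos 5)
Your proposal is correct and follows essentially the same route as the paper's proof: induction on $k$ reduced to the one-step bound $(I-B)\colon\mathcal{A}_s^0\to\mathcal{A}_s^1$, which is established by the same multilinear expansion of $BP_{\mathbf{a},\mathbf{s},v}$ grouped by the collision pattern of the $S=1+\sum_i s_i$ sampled indices, with the falling factorial $n(n-1)\cdots(n-S+1)/n^S=1+O_S(n^{-1})$ isolating the leading generator and all coarser patterns carrying an $O_S(n^{-1})$ prefactor. Your set-partition phrasing is just an alternative bookkeeping for what the paper encodes via the multi-index sets $\mathcal{I}_{\mathbf{s}}$ and $\mathcal{J}_{\mathbf{m}}$; the underlying combinatorics, rate estimates, and the recursion for the constants $C_k(s)$ are the same.
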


\begin{proof}[Proof of Lemma \ref{lemma:order_k_induction}]
    We begin by proving that $(I-B)h(\pi)\in \mathcal{A}_s^1$.
    Since $h(\pi)\in \mathcal{A}_s^0$, let
    \begin{align*}
        h(\pi)=\sum_{(\mathbf{a}, \mathbf{s}, v)\in\mathfrak{J}_s} \alpha_{\mathbf{a}, \mathbf{s}, v} \left\{\int_A \ell^v(x)\pi(dx) \right\}\prod_{i} \left\{ \int \ell^{a_i}(x) \, \pi(dx) \right\}^{s_i}.
    \end{align*}
    Note that $|\mathfrak{J}_s|$ does not depend on $n$ and $|\alpha_{\mathbf{a}, \mathbf{s}, v}|\le  C_{0}(s)$. It suffices to verify that each individual term in the sum satisfies
    \begin{align*}
        (I-B)\left[\left\{\int_A \ell^v(x)\pi(dx) \right\}\prod_{i} \left\{ \int \ell^{a_i}(x) \, \pi(dx) \right\}^{s_i}\right]\in \mathcal{A}_s^1.
    \end{align*}
    
    Without loss of generality, let $\mathbf{a}=(a_1,\ldots, a_p)$ and $\mathbf{s}=(s_1,\ldots, s_p)$, $s'=\sum_{i=1}^p s_i$. Then
    we have $\sum_i^p a_is_i+v=s$ and
    \begin{align*}
        &B\left[\left\{\int_A \ell^v(x)\pi(dx) \right\}\prod_{i=1}^p \left\{ \int \ell^{a_i}(x) \, \pi(dx) \right\}^{s_i}\right]\\
        &=\E\left[\left\{\int_A \ell^v(x)\hat{\pi}(dx) \right\}\prod_{i=1}^p \left\{ \int \ell^{a_i}(x) \, \hat{\pi}(dx) \right\}^{s_i}\right]\\
        &=\frac{1}{n^{s'+1}}\E\left[\left\{\sum_{j=1}^n\ell^v(X_j)\delta_{X_j}(A)\right\}\prod_{i=1}^p \left\{ \sum_{j=1}^n \ell^{a_i}(X_j) \right\}^{s_i}\right].
    \end{align*}

    For the term $\prod_{i=1}^p \left\{ \sum_{j=1}^n \ell^{a_i}(X_j) \right\}^{s_i}$, let $m_j^{(i)}$ denote the times $X_j$ appears with powers $a_i$, then we have $\sum_{j=1}^n m_j^{(i)}=s_i$ for $1\le i \le p$.
    Define
    \begin{align*}
        \mathcal{I}_{\mathbf{s}}=
        \left\{
            \mathbf{m}=\big(m_j^{(i)}\big)_{j\in[n], i\in [p]}\colon\,
            \sum_{j=1}^n m_j^{(i)}=s_i
            \text{ for all } i\in[p]
        \right\}.
    \end{align*}
    Therefore, 
    \begin{align}
        &\left\{\sum_{j=1}^n\ell^v(X_j)\delta_{X_j}(A)\right\}\prod_{i=1}^p \left\{ \sum_{j=1}^n \ell^{a_i}(X_j) \right\}^{s_i}
        =\left\{\sum_{j=1}^n\ell^v(X_j)\delta_{X_j}(A)\right\}\sum_{\mathbf{m}\in\mathcal{I}_{\mathbf{s}}}c_{\mathbf{s}, \mathbf{m}}\prod_{j=1}^n\ell^{\sum_{i=1}^pa_im_j^{(i)}}(X_j),
        \label{eq:expand_prod_h}
    \end{align}
    where 
    \begin{align*}
        c_{\mathbf{s}, \mathbf{m}}=\prod_{i=1}^p\frac{s_i!}{\prod_{j=1}^nm_j^{(i)}!}.
    \end{align*}

    Note that $c_{\mathbf{s}, \mathbf{m}}$ does not depend on $n$. Now we expand ${\rm R.H.S.}$ of \eqref{eq:expand_prod_h} based on the number of distinct variables $X_j$ appear, i.e., $\sum_{j=1}^n\I_{\sum_{i=1}^pa_im_j^{(i)}>0}$, which is equal to $\sum_{j=1}^n\I_{\sum_{i=1}^pm_j^{(i)}>0}$. 
    Define
    \begin{align*}
        \mathcal{J}_{\mathbf{m}}=\left\{j\in[n]\colon\, \sum_{i=1}^pm_j^{(i)}>0\right\},
    \end{align*}
    then we have $1\le|\mathcal{J}_{\mathbf{m}}|\le s'$.

    Hence,
    \begin{align}
        &\quad \ \E\left[\left\{\sum_{j=1}^n\ell^v(X_j)\delta_{X_j}(A)\right\}\prod_{i=1}^p \left\{ \sum_{j=1}^n \ell^{a_i}(X_j) \right\}^{s_i}\right]\nonumber\\
        &=\E\left[\left\{\sum_{j=1}^n\ell^v(X_j)\delta_{X_j}(A)\right\}\sum_{m=1}^{s'}\sum_{\substack{\mathbf{m}\in\mathcal{I}_{\mathbf{s}}\nonumber\\ |\mathcal{J}_{\mathbf{m}}|=m}}c_{\mathbf{s}, \mathbf{m}}\prod_{j=1}^n\ell^{\sum_{i=1}^pa_im_j^{(i)}}(X_j)\right]\nonumber\\
        &=\E\left\{\sum_{m=1}^{s'}\sum_{\substack{\mathbf{m}\in\mathcal{I}_{\mathbf{s}}\nonumber\\ |\mathcal{J}_{\mathbf{m}}|=m}}c_{\mathbf{s}, \mathbf{m}}\sum_{t=1}^n\ell^v(X_t)\delta_{X_t}(A)\prod_{j=1}^n\ell^{\sum_{i=1}^pa_im_j^{(i)}}(X_j)\right\}\nonumber
        \\
        &=n(n-1)\cdots(n-s')c_{\mathbf{s}, \mathbf{m}^*}\left\{\int_A \ell^v(x)\pi(dx)\right\}\prod_{i=1}^p\left[\E\left\{\ell^{a_i}(X)\right\}\right]^{s_i}\nonumber\\
        &\quad \ +\E\left\{\sum_{\substack{\mathbf{m}\in\mathcal{I}_{\mathbf{s}}\nonumber\\ |\mathcal{J}_{\mathbf{m}}|=s'}}c_{\mathbf{s}, \mathbf{m}}\sum_{t\in \mathcal{J}_{\mathbf{m}}}\ell^v(X_t)\delta_{X_t}(A)\prod_{j=1}^n\ell^{\sum_{i=1}^pa_im_j^{(i)}}(X_j)\right\}
        \nonumber\\
        &\quad \ +
        \E\left\{\sum_{m=1}^{s'-1}\sum_{\substack{\mathbf{m}\in\mathcal{I}_{\mathbf{s}}\\ |\mathcal{J}_{\mathbf{m}}|=m}}c_{\mathbf{s}, \mathbf{m}}\sum_{t=1}^n\ell^v(X_t)\delta_{X_t}(A)\prod_{j=1}^n\ell^{\sum_{i=1}^pa_im_j^{(i)}}(X_j)\right\},\label{eq:expand_of_expectation}
        \end{align}
        where $c_{\mathbf{s}, \mathbf{m}^*}=\prod_{i=1}^p s_i!$.
        
The three terms in \eqref{eq:expand_of_expectation} are interpreted as follows: we can expand $\left\{\sum_{t=1}^n\ell^v(X_t)\delta_{X_t}(A)\right\}\prod_{i=1}^p \left\{ \sum_{j=1}^n \ell^{a_i}(X_j) \right\}^{s_i}$ as the sum of many product terms of the form $\ell^v(X_t)\prod_{i=1}^p\prod_{l=1}^{s_i}\ell^{a_i}(X_{j_{i,l}})$.
The first term in \eqref{eq:expand_of_expectation} corresponds to the partial sum of terms in which all of $X_t, (X_{j_{i,l}})_{i,l}$ are distinct. 
The second term in \eqref{eq:expand_of_expectation} corresponds to the partial sum of terms in which $X_t$ is identical to one of $(X_{j_{i,l}})_{i,l}$ while the latter are distinct.
The third term corresponds to the partial sum of terms in which at least two of $(X_{j_{i,l}})_{i,l}$ are identical.
The last two term in \eqref{eq:expand_of_expectation} are at least $\mathcal{O}(n^{-1})$ factor smaller than the first (due to fewer terms involved in the sum because of the constraint of having identical terms), 
while the first term will cancel with $I\cdot h(\pi)$ when applying $I-B$ to $h$.

        Let $\mathcal{P}(b_1, \ldots, b_m)$ denote the set of all distinct permutations of the vector consisting of $m$ non-zero elements $b_1, \ldots, b_m$ and $n-m$ zeros. 
        Note that even the values of $b_i$ may be the same, we still treat the $b_i$s are distinguishable. 
        Then since $0$s are identical, we have
        $|\mathcal{P}(b_1, \ldots, b_m)|=n(n-1)\cdots (n-m+1)=\mathcal{O}(n^m)$.
        Additionally, 
        for any $\mathbf{a}$ and $\mathbf{m}$, we define
        \begin{align*}
            \Psi(\mathbf{a}, \mathbf{m})= \left(\sum_{i=1}^pa_im_1^{(i)}, \ldots, \sum_{i=1}^pa_im_n^{(i)}\right).
        \end{align*}

        Now we can write \eqref{eq:expand_of_expectation} as
        \begin{align*}
        & \quad \ n(n-1)\cdots(n-s')c_{\mathbf{s}, \mathbf{m}^*}\left\{\int_A \ell^v(x)\pi(dx)\right\}\prod_{i=1}^p\left[\E\{\ell^{a_i}(X)\}\right]^{s_i}\\
        &\quad \quad  +\sum_{b_k:\sum_{k=1}^{s'} b_k=s-v} \ \ \sum_{\mathbf{m}:\Psi(\mathbf{a}, \mathbf{m})\in\mathcal{P}(b_1, \ldots, b_{s'})}c_{\mathbf{s}, \mathbf{m}}\sum_{t=1}^m\left[\prod_{i\neq t}\E\{\ell^{b_i}(X)\}\right]\displaystyle\int_A \ell^{b_t+v}(x)\pi(dx)\\
        & \quad \quad +\sum_{m=1}^{s'-1} \ \sum_{b_k:\sum_{k=1}^m b_k=s-v}\ \ \sum_{\mathbf{m}:\Psi(\mathbf{a}, \mathbf{m})\in\mathcal{P}(b_1, \ldots, b_m)}c_{\mathbf{s}, \mathbf{m}}\E\left\{\sum_{t=1}^n\ell^v(X_t)\delta_{X_t}(A)\prod_{i=1}^m \ell^{b_i}(X_i)\right\}
        \\
        &=n(n-1)\cdots(n-s')c_{\mathbf{s}, \mathbf{m}^*}\left\{\int_A \ell^v(x)\pi(dx)\right\}\prod_{i=1}^p\left[\E\{\ell^{a_i}(X)\}\right]^{s_i}\\
        &\quad \quad +\sum_{b_k:\sum_{k=1}^{s'} b_k=s-v}\mathcal{O}(n^{s'})c_{\mathbf{s}, \mathbf{m}}\sum_{t=1}^m\left[\prod_{i\neq t}\E\{\ell^{b_i}(X)\}\right]\displaystyle\int_A \ell^{b_t+v}(x)\pi(dx)
        \\
        &\quad \quad +\sum_{m=1}^{s'-1} \ \sum_{b_k:\sum_{k=1}^m b_k=s-v}\mathcal{O}(n^m)c_{\mathbf{s}, \mathbf{m}}\E\left\{\sum_{t=1}^n\ell^v(X_t)\delta_{X_t}(A)\prod_{i=1}^m \ell^{b_i}(X_i)\right\}.
    \end{align*}
    
    Therefore, 
    \begin{align*}
        &\quad \ (I-B)\left[\left\{\int_A \ell^v(x)\pi(dx) \right\}\prod_{i} \left\{ \int \ell^{a_i}(x) \, \pi(dx) \right\}^{s_i}\right]\\
        &=\frac{n^{s'+1}-n(n-1)\cdots(n-s')}{n^{s'+1}}c_{\mathbf{s}, \mathbf{m}^*}
        \left\{\int_A \ell^v(x)\pi(dx)\right\}\prod_{i=1}^p\left[\E\{\ell^{a_i}(X)\}\right]^{s_i}\\
        &\quad \quad -\sum_{b_k:\sum_{k=1}^{s'} b_k=s-v}\mathcal{O}(n^{-1})c_{\mathbf{s}, \mathbf{m}}\sum_{t=1}^m\left[\prod_{i\neq t}\E\{\ell^{b_i}(X)\}\right]\displaystyle\int_A \ell^{b_t+v}(x)\pi(dx)
        \\
        &\quad \quad -\sum_{m=1}^{s'-1}\  \sum_{b_k:\sum_{k=1}^m b_k=s-v}\mathcal{O}(n^{m-s'-1})c_{\mathbf{s}, \mathbf{m}}\E\left\{\sum_{t=1}^n\ell^v(X_t)\delta_{X_t}(A)\prod_{i=1}^m \ell^{b_i}(X_i)\right\}\\
        &=\frac{n^{s'+1}-n(n-1)\cdots(n-s')}{n^{s'+1}}c_{\mathbf{s}, \mathbf{m}^*}
        \left\{\int_A \ell^v(x)\pi(dx)\right\}\prod_{i=1}^p\left[\E\{\ell^{a_i}(X)\}\right]^{s_i}\\
        &\quad \quad -\sum_{b_k:\sum_{k=1}^{s'} b_k=s-v}\mathcal{O}(n^{-1})c_{\mathbf{s}, \mathbf{m}}\sum_{t=1}^m\left[\prod_{i\neq t}\E\{\ell^{b_i}(X)\}\right]\displaystyle\int_A \ell^{b_t+v}(x)\pi(dx)
        \\
        &\quad \quad -\sum_{m=1}^{s'-1} \ \sum_{b_k:\sum_{k=1}^m b_k=s-v}\mathcal{O}(n^{m-s'-1})c_{\mathbf{s}, \mathbf{m}}\sum_{t=1}^m\left[\prod_{i\neq t}\E\{\ell^{b_i}(X)\}\right]\displaystyle\int_A \ell^{b_t+v}(x)\pi(dx)\\
        &\quad \quad -\sum_{m=1}^{s'-1} \ \sum_{b_k:\sum_{k=1}^m b_k=s-v}\mathcal{O}(n^{m-s'})c_{\mathbf{s}, \mathbf{m}}
        \left\{\int_A \ell^v(x)\pi(dx)\right\}\prod_{i=1}^m \E\{\ell^{b_i}(X_i)\}\\
        &\in \mathcal{A}_s^1.
    \end{align*}
    
    The last inlcusion follows from that fact that $\{n^{s'+1}-n(n-1)\cdots(n-s')\}/n^{s'+1}=\mathcal{O}(n^{-1})$ and the number of solutions to $\sum_{k=1}^m b_k=s-v$ does not depend on $n$ but depends on $s$.

    Now we suppose \eqref{eq:k_iteration} holds for $k$. Then we can set
    \begin{align*}
        (I-B)^kh(\pi)=\sum_{(\mathbf{a}, \mathbf{s}, v)\in\mathfrak{J}_s} \alpha'_{\mathbf{a}, \mathbf{s}, v} \left\{\int_A \ell^{v}(x)\pi(dx) \right\} \prod_{i} \left\{ \int \ell^{a_i}(x) \, \pi(dx) \right\}^{s_i},
    \end{align*}
    where $|\alpha'_{\mathbf{a}, \mathbf{s}, v}|\le C_{k}(s) n^{-k}$.
    Then for $k+1$, we have
    \begin{align*}
        (I-B)^{k+1}h(\pi)=\sum_{(\mathbf{a}, \mathbf{s}, v)\in\mathfrak{J}_s} \alpha'_{\mathbf{a}, \mathbf{s}, v}(I-B) \left\{\int_A \ell^{v}(x)\pi(dx) \right\}\prod_{i} \left\{ \int \ell^{a_i}(x) \, \pi(dx) \right\}^{s_i}.
    \end{align*}
    Since for all $\mathbf{a}, \mathbf{s}, v$ such that $(\mathbf{a}, \mathbf{s}, v)\in\mathfrak{J}_s$, $\left\{\int_A \ell^{v}(x)\pi(dx) \right\}\prod_{i} \left\{ \int \ell^{a_i}(x) \ \pi(dx) \right\}^{s_i}\in \mathcal{A}_s^0$,
    we have $(I-B)\left\{\int_A \ell^{v}(x)\pi(dx) \right\}\prod_{i} \left\{ \int \ell^{a_i}(x) \ \pi(dx) \right\}^{s_i}\in \mathcal{A}_s^1$,
    namely, 
    \begin{align*}
       &\quad \ (I-B)\left\{\int_A \ell^{v}(x)\pi(dx) \right\}\prod_{i} \left\{ \int \ell^{a_i}(x) \ \pi(dx) \right\}^{s_i}\\
       &= \sum_{(\mathbf{b}, \mathbf{t}, u)\in\mathfrak{J}_s}\alpha_{\mathbf{b}, \mathbf{t}, u}(\mathbf{a}, \mathbf{s}, v)\left\{\int_A \ell^{u}(x)\pi(dx) \right\}\prod_{i} \left\{ \int \ell^{b_i}(x) \, \pi(dx) \right\}^{t_i},
    \end{align*}
    where $|\alpha_{\mathbf{b}, \mathbf{t}, u}(\mathbf{a}, \mathbf{s}, v)|\le C_{0}(s) n^{-1}$. Therefore,
    \begin{align*}
        &\quad \ (I-B)^{k+1}h(\pi)\\
        &=\sum_{(\mathbf{a}, \mathbf{s}, v)\in\mathfrak{J}_s} \alpha'_{\mathbf{a}, \mathbf{s}, v}\sum_{(\mathbf{b}, \mathbf{t}, u)\in\mathfrak{J}_s}\alpha_{\mathbf{b}, \mathbf{t}, u}(\mathbf{a}, \mathbf{s}, v)\left\{\int_A \ell^{u}(x)\pi(dx) \right\}\prod_{i} \left\{ \int \ell^{b_i}(x) \, \pi(dx) \right\}^{t_i}\\
        &\in \mathcal{A}_s^{k+1}.
    \end{align*}
\end{proof}

\section{Proof of Theorem \ref{thm:discrete_case}}

In order to prove Theorem \ref{thm:discrete_case}, we first make some preliminary observations.

Let function $f$ defined on the simplex $\Delta_m=\{\mathbf{q}\in\mbb{R}^m: q_j\ge 0, \sum_{j=1}^m q_j=1\}$.
Define the generalized Bernstein basis polynomials of degree $n$ as
  \begin{align*}
    b_{n, \bm{\nu}}(\mathbf{q})=\binom{n}{\bm{\nu}} \mathbf{q}^{\bm{\nu}}.
  \end{align*} 

  \begin{lemma}\label{lemma:upper_bound_alpha_central_moment}
    $\left|\sum_{\bm{\nu}\in\bar{\Delta}_m}(\bm{\nu}/n-\mathbf{q})^{\bm{\alpha}}b_{n, \bm{\nu}}(\mathbf{q})\right|\lesssim n^{-\|\bm{\alpha}\|_{1}/2}$.
\end{lemma}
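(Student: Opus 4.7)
The plan is to recognize the sum as a multinomial central moment and bound it by standard arguments. Let $\mathbf{T}=(T_1,\dots,T_m)$ be multinomially distributed with parameters $(n,\mathbf{q})$, so that the probability mass function is $\mathbb{P}(\mathbf{T}=\bm{\nu})=b_{n,\bm{\nu}}(\mathbf{q})$ for $\bm{\nu}\in\bar\Delta_m$. Then the quantity in question is exactly
\begin{align*}
\sum_{\bm{\nu}\in\bar\Delta_m}\Bigl(\tfrac{\bm{\nu}}{n}-\mathbf{q}\Bigr)^{\bm{\alpha}}b_{n,\bm{\nu}}(\mathbf{q})
=\mathbb{E}\Bigl[\prod_{j=1}^{m}\Bigl(\tfrac{T_j}{n}-q_j\Bigr)^{\alpha_j}\Bigr].
\end{align*}
So the task reduces to bounding this mixed central moment by $O(n^{-\|\bm{\alpha}\|_1/2})$.

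The approach I would take is the representation of $T_j$ as a sum of indicators. Writing $Y_j := T_j/n - q_j = n^{-1}\sum_{i=1}^n Z_{ij}$ with $Z_{ij}:=\I_{X_i=u_j}-q_j$, each $Z_{ij}$ is centered and uniformly bounded by $1$, and the vectors $(Z_{i1},\ldots,Z_{im})$ are i.i.d.\ across $i$ (though correlated across $j$). Expanding the product,
\begin{align*}
\prod_{j=1}^m Y_j^{\alpha_j}=n^{-\|\bm{\alpha}\|_1}\sum_{\mathbf{i}}\prod_{(j,r)}Z_{i_{j,r}\,j},
\end{align*}
where the outer sum runs over multi-indices of length $\|\bm{\alpha}\|_1$ (one index $i_{j,r}\in[n]$ for each $j$ and each $r\in[\alpha_j]$). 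I would then take the expectation: by independence across $i$ and centeredness of $Z_{ij}$, a term with expectation $\neq 0$ must have every distinct index among the $i_{j,r}$ appearing at least twice. Counting such multi-indices: there are at most $\lfloor\|\bm{\alpha}\|_1/2\rfloor$ distinct indices, each taking at most $n$ values, so the number of non-vanishing terms is $O_{m,\|\bm{\alpha}\|_1}(n^{\lfloor\|\bm{\alpha}\|_1/2\rfloor})$. Since $|Z_{ij}|\le 1$, each such summand contributes $O(1)$, so
\begin{align*}
\bigl|\mathbb{E}\prod_{j=1}^m Y_j^{\alpha_j}\bigr|\;\lesssim\;n^{-\|\bm{\alpha}\|_1}\cdot n^{\lfloor\|\bm{\alpha}\|_1/2\rfloor}\;\le\;n^{-\|\bm{\alpha}\|_1/2},
\end{align*}
which is the desired bound.

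An alternative clean route, which avoids the combinatorial pairing bookkeeping, is to combine a one-dimensional central moment bound with H\"older's inequality. For each $j$, $nY_j=\sum_{i=1}^n Z_{ij}$ is a sum of i.i.d.\ centered bounded random variables, so by the Marcinkiewicz–Zygmund (or Rosenthal) inequality, $\mathbb{E}|Y_j|^{p}\lesssim_p n^{-p/2}$ for every integer $p\ge 1$. Applying generalized H\"older's inequality with exponents $\|\bm{\alpha}\|_1/\alpha_j$ yields
\begin{align*}
\mathbb{E}\prod_{j=1}^m|Y_j|^{\alpha_j}\;\le\;\prod_{j=1}^m\bigl(\mathbb{E}|Y_j|^{\|\bm{\alpha}\|_1}\bigr)^{\alpha_j/\|\bm{\alpha}\|_1}\;\lesssim\;\prod_{j=1}^m\bigl(n^{-\|\bm{\alpha}\|_1/2}\bigr)^{\alpha_j/\|\bm{\alpha}\|_1}\;=\;n^{-\|\bm{\alpha}\|_1/2}.
\end{align*}
The one real obstacle is making the constants explicit enough to conclude $\lesssim$ in the stated sense, in particular verifying that the hidden constant may depend on $m$ and $\|\bm{\alpha}\|_1$ but not on $\mathbf{q}$. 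In the combinatorial route this is immediate because the number of pairing patterns depends only on $m$ and $\|\bm{\alpha}\|_1$; in the H\"older route this is immediate from the uniform bound $|Z_{ij}|\le 1$ in the Marcinkiewicz–Zygmund constant. I would go with the combinatorial argument as the main proof since it is more elementary and self-contained, and mention the H\"older variant as an alternative.
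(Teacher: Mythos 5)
Your proposal is correct, and it takes a genuinely different route than the paper. The paper treats $T_{n,\bm{\alpha}}:=\sum_{\bm{\nu}\in\bar\Delta_m}(\bm{\nu}-n\mathbf{q})^{\bm{\alpha}}b_{n,\bm{\nu}}(\mathbf{q})$ as a function of $q_1,\dots,q_{m-1}$, differentiates to obtain a recurrence relating $T_{n,\bm{\alpha}}$ to $T_{n,\bm{\alpha}\pm\bm{\gamma}}$, $T_{n,\bm{\alpha}\pm\bm{\theta}}$ and $\partial^{\bm{\beta}}T_{n,\bm{\alpha}}$, and solves the recurrence by induction to obtain the explicit polynomial form $T_{n,\bm{\alpha}}=\sum_{j=1}^{\lfloor\|\bm{\alpha}\|_1/2\rfloor}n^{j}\bigl(\sum_{\bm{\eta}\le\bm{\alpha}}c_{j,\bm{\eta}}\mathbf{q}^{\bm{\eta}}\bigr)$, from which the bound follows. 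You instead recognize the sum as a mixed central moment of a multinomial vector, expand via i.i.d.\ centered indicators, and count pairing patterns, which is more elementary and makes the probabilistic content transparent (your H\"older/Marcinkiewicz--Zygmund variant is also valid). Both arguments yield the sharper $\lfloor\|\bm{\alpha}\|_1/2\rfloor-\|\bm{\alpha}\|_1$ exponent, which suffices. The trade-off is worth noting: the paper's recurrence yields not only the size of $T_{n,\bm{\alpha}}$ but its explicit structure as a polynomial in $n$ with $\mathbf{q}$-polynomial coefficients, and the proof of Lemma~\ref{lemma:bernstein_approximation} later invokes precisely this form \eqref{eq:T_explicit_form} to control $\bigl\|\partial^{\bm{\beta}-\bm{\gamma}}\bigl\{\sum_{\bm{\nu}}(\bm{\nu}/n-\mathbf{q})^{\bm{\alpha}}b_{n,\bm{\nu}}(\mathbf{q})\bigr\}\bigr\|_\infty$. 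Your combinatorial argument proves the stated lemma but does not directly hand you that polynomial-in-$\mathbf{q}$ structure, so if you adopted it you would need a supplementary argument (e.g.\ re-running the pairing count after differentiating in $\mathbf{q}$) to support the later derivative bound.
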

\begin{proof}[Proof of Lemma \ref{lemma:upper_bound_alpha_central_moment}]
    It sufficies to show that $\left|\sum_{\bm{\nu}\in\bar{\Delta}_m}(\bm{\nu}-n\mathbf{q})^{\bm{\alpha}}b_{n, \bm{\nu}}(\mathbf{q})\right|\lesssim n^{\|\bm{\alpha}\|_{1}/2}$.
    Since $\mathbf{q}\in\Delta_m$,  we treat $T_{n, \bm{\alpha}}\equiv \sum_{\bm{\nu}\in\bar{\Delta}_m}(\bm{\nu}-n\mathbf{q})^{\bm{\alpha}}b_{n, \bm{\nu}}(\mathbf{q})$ as a function of the variables $q_1, \cdots, q_{m-1}$.
    For any $\bm{\beta}\in \mathbb{N}^{m-1}$ such that $\|\bm{\beta}\|_{1}= 1$, we let $\bm{\gamma}=\bm{\gamma}(\bm{\beta})\equiv (\bm{\beta}^\top, 0)^\top.$
    Additionally, let $\bm{\theta} = (0,\cdots, 0, 1)^\top\in\mathbb{N}^{m}$.
    Since 
    \begin{align*}
      \partial^{\bm{\beta}}(\bm{\nu}-n\mathbf{q})^{\bm{\alpha}}=-n\bm{\alpha}^{\bm{\gamma}}(\bm{\nu}-n\mathbf{q})^{\bm{\alpha}-\bm{\gamma}}+n{\bm{\alpha}}^{\bm{\theta}}(\bm{\nu}-n\mathbf{q})^{\bm{\alpha}-\bm{\theta}},
    \end{align*}
    and 
    \begin{align*}
      \partial^{\bm{\beta}}b_{n, \bm{\nu}}(\mathbf{q})&=\binom{n}{\bm{\nu}}(\bm{\nu}^{\bm{\gamma}}\mathbf{q}^{\bm{\nu}-\bm{\gamma}}-\bm{\nu}^{\bm{\theta}}\mathbf{q}^{\bm{\nu}-\bm{\theta}})\\ 
      &= b_{n, \bm{\nu}}(\mathbf{q})\left\{\frac{1}{\mathbf{q}^{\bm{\gamma}}}(\bm{\nu}-n\mathbf{q})^{\bm{\gamma}}-\frac{1}{\mathbf{q}^{\bm{\theta}}}(\bm{\nu}-n\mathbf{q})^{\bm{\theta}}\right\},
    \end{align*}
    we have 
    \begin{align*}
      \partial^{\bm{\beta}}T_{n, \bm{\alpha}}&=\sum_{\bm{\nu}\in\bar{\Delta}_m}\partial^{\bm{\beta}}(\bm{\nu}-n\mathbf{q})^{\bm{\alpha}}b_{n, \bm{\nu}}(\mathbf{q}) + \sum_{\bm{\nu}\in\bar{\Delta}_m}(\bm{\nu}-n\mathbf{q})^{\bm{\alpha}}\partial^{\bm{\beta}}b_{n, \bm{\nu}}(\mathbf{q}) \\ 
      &=-n\bm{\alpha}^{\bm{\gamma}}T_{n, \bm{\alpha-\gamma}} + n\bm{\alpha}^{\bm{\theta}}T_{n, \bm{\alpha-\theta}} +\frac{1}{\mathbf{q}^{\bm{\gamma}}}T_{n, \bm{\alpha+\gamma}}-\frac{1}{\mathbf{q}^{\bm{\theta}}}T_{n, \bm{\alpha+\theta}},
    \end{align*}
    i.e., 
    \begin{align*}
      \mathbf{q}^{\bm{\gamma}}\partial^{\bm{\beta}}T_{n, \bm{\alpha}} = -n\bm{\alpha}^{\bm{\gamma}}\mathbf{q}^{\bm{\gamma}}T_{n, \bm{\alpha-\gamma}} + n\bm{\alpha}^{\bm{\theta}}\mathbf{q}^{\bm{\gamma}}T_{n, \bm{\alpha-\theta}} + T_{n, \bm{\alpha+\gamma}}-\frac{\mathbf{q}^{\bm{\gamma}}}{\mathbf{q}^{\bm{\theta}}}T_{n, \bm{\alpha+\theta}}.
    \end{align*}
    By summing the above equation over $\bm{\beta}\in\mathbb{N}^{m-1}$ such that $\|\bm{\beta}\|_{1}= 1$, we have
    \begin{align*}
      &\quad \ \sum_{\|\bm{\beta}\|_1=1}\mathbf{q}^{\bm{\gamma}}\partial^{\bm{\beta}}T_{n, \bm{\alpha}} \\
      &= -n\sum_{\|\bm{\beta}\|_1=1}\bm{\alpha}^{\bm{\gamma}}\mathbf{q}^{\bm{\gamma}}T_{n, \bm{\alpha-\gamma}} + n\bm{\alpha}^{\bm{\theta}}\sum_{\|\bm{\beta}\|_1=1}\mathbf{q}^{\bm{\gamma}}T_{n, \bm{\alpha-\theta}} + \sum_{\|\bm{\beta}\|_1=1}T_{n, \bm{\alpha+\gamma}}-\frac{1-\mathbf{q}^{\bm{\theta}}}{\mathbf{q}^{\bm{\theta}}}T_{n, \bm{\alpha+\theta}}\\ 
      &=-n\sum_{\|\bm{\beta}\|_1=1}\bm{\alpha}^{\bm{\gamma}}\mathbf{q}^{\bm{\gamma}}T_{n, \bm{\alpha-\gamma}} + n\bm{\alpha}^{\bm{\theta}}(1-\mathbf{q}^{\bm{\theta}})T_{n, \bm{\alpha-\theta}} + \sum_{\|\bm{\beta}\|_1=1}T_{n, \bm{\alpha+\gamma}}-\frac{1-\mathbf{q}^{\bm{\theta}}}{\mathbf{q}^{\bm{\theta}}}T_{n, \bm{\alpha+\theta}}\\ 
      &= -n\sum_{\|\bm{\beta}\|_1=1}\bm{\alpha}^{\bm{\gamma}}\mathbf{q}^{\bm{\gamma}}T_{n, \bm{\alpha-\gamma}} + n\bm{\alpha}^{\bm{\theta}}(1-\mathbf{q}^{\bm{\theta}})T_{n, \bm{\alpha-\theta}} -\frac{1}{\mathbf{q}^{\bm{\theta}}}T_{n, \bm{\alpha+\theta}},
    \end{align*}
    where the last equality follows from the fact that $\sum_{\|\bm{\beta}\|_1=1}T_{n, \bm{\alpha+\gamma}}+T_{n, \bm{\alpha+\theta}}=0$.

    Therefore, we have the following recurrence formula:
    \begin{align}
      T_{n, \bm{\alpha+\theta}} &= -n \mathbf{q}^{\bm{\theta}}\sum_{\|\bm{\beta}\|_1=1}\bm{\alpha}^{\bm{\gamma}}\mathbf{q}^{\bm{\gamma}}T_{n, \bm{\alpha-\gamma}} + n\bm{\alpha}^{\bm{\theta}}\mathbf{q}^{\bm{\theta}}(1-\mathbf{q}^{\bm{\theta}})T_{n, \bm{\alpha-\theta}} - \mathbf{q}^{\bm{\theta}}\sum_{\|\bm{\beta}\|_1=1}\mathbf{q}^{\bm{\gamma}}\partial^{\bm{\beta}}T_{n, \bm{\alpha}}, \label{eq:recurrence_1}\\ 
      T_{n, \bm{\alpha+\gamma}} &=  \frac{\mathbf{q}^{\bm{\gamma}}}{\mathbf{q}^{\bm{\theta}}}T_{n, \bm{\alpha+\theta}} + n\bm{\alpha}^{\bm{\gamma}}\mathbf{q}^{\bm{\gamma}}T_{n, \bm{\alpha-\gamma}} - n\bm{\alpha}^{\bm{\theta}}\mathbf{q}^{\bm{\gamma}}T_{n, \bm{\alpha-\theta}} + \mathbf{q}^{\bm{\gamma}}\partial^{\bm{\beta}}T_{n, \bm{\alpha}}.\label{eq:recurrence_2}
    \end{align}
    Using the recurrence fomrula \eqref{eq:recurrence_1}, \eqref{eq:recurrence_2} and the fact that $T_{n, (1, 0, \cdots, 0)^\top}=0, T_{n, (2, 0, \cdots, 0)^\top}=nq_1(1-q_1), T_{n, (1, 1, 0, \cdots, 0)^\top}=-nq_1q_2$, 
    $T_{n, \bm{\alpha}}$ has the following form by using induction:
    \begin{align}
      T_{n, \bm{\alpha}} = \sum_{j=1}^{\lfloor \|\bm{\alpha}\|_{1}/2\rfloor} n^j (\sum_{\bm{\eta}\le\bm{\alpha}}c_{j, \bm{\eta}}\mathbf{q}^{\bm{\eta}}), \label{eq:T_explicit_form}
    \end{align}
    where $c_{j, \bm{\eta}}$ is independent of $n$.
    Then we can conclude that $|T_{n, \bm{\alpha}}|\lesssim n^{\lfloor \|\bm{\alpha}\|_{1}/2\rfloor}\lesssim n^{\|\bm{\alpha}\|_{1}/2}$.

\end{proof}

\begin{proof}[Proof of Lemma \ref{lemma:bernstein_approximation}]
  We prove the theorem by induction on $k$.

  For $k=1$, by Taylor's expansion, there exists $\bm{\xi}\in\Delta_m$ such that
  \begin{align*}
    f(\frac{\bm{\nu}}{n})=f(\mathbf{q})+\sum_{\|\bm{\alpha}\|_{1}=1}\partial^{\bm{\alpha}}f(\bm{\xi})(\frac{\bm{\nu}}{n}-\mathbf{q})^{\bm{\alpha}}.
  \end{align*}
  Then we have
  \begin{align*}
    |B_n(f)(\mathbf{q})-f(\mathbf{q})|
    &=\left|\sum_{\bm{\nu}\in\bar{\Delta}_{m}} \left\{f(\frac{\bm{\nu}}{n})-f(\mathbf{q})\right\}b_{n, \bm{\nu}}(\mathbf{q})\right|\\
    &=\left|\sum_{\bm{\nu}\in\bar{\Delta}_{m}} \left\{\sum_{\|\bm{\alpha}\|_{1}=1}\partial^{\bm{\alpha}}f(\bm{\xi})(\frac{\bm{\nu}}{n}-\mathbf{q})^{\bm{\alpha}}\right\}b_{n, \bm{\nu}}(\mathbf{q})\right|\\
    &\le\|f\|_{C^1(\Delta_m)}\sum_{\|\bm{\alpha}\|_{1}=1}\sum_{\bm{\nu}\in\bar{\Delta}_m}\left|(\frac{\bm{\nu}}{n}-\mathbf{q})^{\bm{\alpha}}\right|b_{n, \bm{\nu}}(\mathbf{q}) \\
    &\le\|f\|_{C^1(\Delta_m)}\sum_{\|\bm{\alpha}\|_{1}=1}\left\{\sum_{\bm{\nu}\in\bar{\Delta}_m}(\frac{\bm{\nu}}{n}-\mathbf{q})^{\bm{2\alpha}}b_{n, \bm{\nu}}(\mathbf{q})\right\}^{1/2} \\
    &\lesssim\|f\|_{C^1(\Delta_m)}\sum_{\|\bm{\alpha}\|_{1}=1}n^{-1/2}\\
    &\lesssim_m \|f\|_{C^{1}(\Delta_m)}n^{-1/2},
  \end{align*}
  where the second inequality follows from Cauchy–Schwarz inequality, and the third inequality follows from Lemma \ref{lemma:upper_bound_alpha_central_moment}.

  Suppose the theorem holds up through $k$. Now we prove the theorem for $k+1$.
  For $k+1$, by Taylor's expansion, there exists $\bm{\xi}\in\Delta_m$ such that
  \begin{align*}
    f(\frac{\bm{\nu}}{n})=f(\mathbf{q})+\sum_{\|\bm{\alpha}\|_{1}=1}^k\frac{\partial^{\bm{\alpha}}f(\mathbf{q})}{\bm{\alpha}!}(\frac{\bm{\nu}}{n}-\mathbf{q})^{\bm{\alpha}}+\sum_{\|\bm{\alpha}\|_{1}=k+1}\frac{\partial^{\bm{\alpha}}f(\bm{\xi})}{\bm{\alpha}!}(\frac{\bm{\nu}}{n}-\mathbf{q})^{\bm{\alpha}}.
  \end{align*}
  Then we have
  \begin{align*}
    B_n(f)(\mathbf{q})-f(\mathbf{q})
    &=\sum_{\bm{\nu}\in\bar{\Delta}_{m}} \left(f(\frac{\bm{\nu}}{n})-f(\mathbf{q})\right)b_{n, \bm{\nu}}(\mathbf{q})\\
    &=\sum_{\bm{\nu}\in\bar{\Delta}_{m}} \left\{\sum_{\|\bm{\alpha}\|_{1}=1}^k\frac{\partial^{\bm{\alpha}}f(\mathbf{q})}{\bm{\alpha}!}(\frac{\bm{\nu}}{n}-\mathbf{q})^{\bm{\alpha}}+\sum_{\|\bm{\alpha}\|_{1}=k+1}\frac{\partial^{\bm{\alpha}}f(\bm{\xi})}{\bm{\alpha}!}(\frac{\bm{\nu}}{n}-\mathbf{q})^{\bm{\alpha}}\right\}b_{n, \bm{\nu}}(\mathbf{q})\\
    &=\sum_{\|\bm{\alpha}\|_{1}=1}^k \frac{\partial^{\bm{\alpha}}f(\mathbf{q})}{\bm{\alpha}!} \left\{\sum_{\bm{\nu}\in\bar{\Delta}_m} (\frac{\bm{\nu}}{n}-\mathbf{q})^{\bm{\alpha}}b_{n, \bm{\nu}}(\mathbf{q})\right\} \\
    &\quad \ + \sum_{\|\bm{\alpha}\|_{1}=k+1} \frac{\partial^{\bm{\alpha}}f(\bm{\xi})}{\bm{\alpha}!} \left\{\sum_{\bm{\nu}\in\bar{\Delta}_m} (\frac{\bm{\nu}}{n}-\mathbf{q})^{\bm{\alpha}}b_{n, \bm{\nu}}(\mathbf{q})\right\}.
  \end{align*}
  Therefore,
  \begin{align}
    (B_n-I)^{\lceil (k+1)/2 \rceil}(f)(\mathbf{q})
    &=\sum_{\|\bm{\alpha}\|_{1}=1}^k (B_n-I)^{\lceil (k+1)/2 \rceil-1} \left\{\frac{\partial^{\bm{\alpha}}f(\mathbf{q})}{\bm{\alpha}!}\sum_{\bm{\nu}\in\bar{\Delta}_m} (\frac{\bm{\nu}}{n}-\mathbf{q})^{\bm{\alpha}}b_{n, \bm{\nu}}(\mathbf{q})\right\} \nonumber\\
    &\quad \ + \sum_{\|\bm{\alpha}\|_{1}=k+1}(B_n-I)^{\lceil (k+1)/2 \rceil-1}\left\{\frac{\partial^{\bm{\alpha}}f(\bm{\xi})}{\bm{\alpha}!} \sum_{\bm{\nu}\in\bar{\Delta}_m} (\frac{\bm{\nu}}{n}-\mathbf{q})^{\bm{\alpha}}b_{n, \bm{\nu}}(\mathbf{q})\right\}.\label{eq:induction_k_1}
  \end{align}
  First, we consider the first term of the right-hand side of \eqref{eq:induction_k_1}.
  We know that ${(\bm{\alpha}!)} ^{-1}\partial^{\bm{\alpha}}f(\mathbf{q})\sum_{\bm{\nu}\in\bar{\Delta}_m} (\bm{\nu}/n-\mathbf{q})^{\bm{\alpha}}b_{n, \bm{\nu}}(\mathbf{q}) \in C^{k+1-\|\bm{\alpha}\|_{1}}(\Delta_m)|$ since $f\in C^{k+1}(\Delta_m)$.
  By the induction hypothesis, we have
  \begin{align*}
    &\left\|(B_n-I)^{\lceil (k+1-\|\bm{\alpha}\|_{1})/2 \rceil}\left\{\frac{\partial^{\bm{\alpha}}f(\mathbf{q})}{\bm{\alpha}!} \sum_{\bm{\nu}\in\bar{\Delta}_m} (\frac{\bm{\nu}}{n}-\mathbf{q})^{\bm{\alpha}}b_{n, \bm{\nu}}(\mathbf{q})\right\}\right\|_{\infty} \\
    \lesssim&_{k+1-\|\bm{\alpha}\|_{1}, m} \left\|\frac{\partial^{\bm{\alpha}}f(\mathbf{q})}{\bm{\alpha}!}\sum_{\bm{\nu}\in\bar{\Delta}_m} (\frac{\bm{\nu}}{n}-\mathbf{q})^{\bm{\alpha}}b_{n, \bm{\nu}}(\mathbf{q})\right\|_{C^{k+1-\|\bm{\alpha}\|_{1}}(\Delta_m)}n^{-(k+1-\|\bm{\alpha}\|_{1})/2}.
  \end{align*}
  Let 
  \begin{align*}
    g_{\bm{\alpha}}(\mathbf{q})=\frac{\partial^{\bm{\alpha}}f(\mathbf{q})}{\bm{\alpha}!}\sum_{\bm{\nu}\in\bar{\Delta}_m} (\frac{\bm{\nu}}{n}-\mathbf{q})^{\bm{\alpha}}b_{n, \bm{\nu}}(\mathbf{q}),
  \end{align*}
  For any $|\bm{\beta}|\le k+1-\|\bm{\alpha}\|_{1}$, we have
  \begin{align*}
    \|\partial^{\bm{\beta}}g_{\bm{\alpha}}(\mathbf{q})\|_{\infty}&=\left\|\frac{1}{\bm{\alpha}!}\sum_{0\le\bm{\gamma}\le\bm{\beta}}\binom{\bm{\beta}}{\bm{\gamma}}\partial^{\bm{\alpha+\gamma}}f(\mathbf{q})\partial^{\bm{\beta-\gamma}}\left\{\sum_{\bm{\nu}\in\bar{\Delta}_m} (\frac{\bm{\nu}}{n}-\mathbf{q})^{\bm{\alpha}}b_{n, \bm{\nu}}(\mathbf{q})\right\}\right\|_{\infty}\\
    &\lesssim_{k+1}  \|f\|_{C^{k+1}(\Delta_m)}\sum_{0\le\bm{\gamma}\le\bm{\beta}} \binom{\bm{\beta}}{\bm{\gamma}}\left\|\partial^{\bm{\beta-\gamma}}\left\{\sum_{\bm{\nu}\in\bar{\Delta}_m} (\frac{\bm{\nu}}{n}-\mathbf{q})^{\bm{\alpha}}b_{n, \bm{\nu}}(\mathbf{q})\right\}\right\|_{\infty}\\
    &\lesssim_{k+1} \|f\|_{C^{k+1}(\Delta_m)}n^{-\|\bm{\alpha}\|_{1}/2},
  \end{align*}
  where the last inequality follows from the fact that $\left\|\partial^{\bm{\beta-\gamma}}\left\{\sum_{\bm{\nu}\in\bar{\Delta}_m} (\bm{\nu}/n-\mathbf{q})^{\bm{\alpha}}b_{n, \bm{\nu}}(\mathbf{q})\right\}\right\|_{\infty}\lesssim n^{-\|\bm{\alpha}\|_{1}/2}$ which can be derived by using the form of $T_{n, \bm{\alpha}}$ in \eqref{eq:T_explicit_form}.

  Therefore, we have
  \begin{align*}
    &\left\|(B_n-I)^{\lceil (k+1-\|\bm{\alpha}\|_{1})/2 \rceil}\left\{\frac{\partial^{\bm{\alpha}}f(\mathbf{q})}{\bm{\alpha}!} \sum_{\bm{\nu}\in\bar{\Delta}_m} (\frac{\bm{\nu}}{n}-\mathbf{q})^{\bm{\alpha}}b_{n, \bm{\nu}}(\mathbf{q})\right\}\right\|_{\infty} \\
    \lesssim&_{k+1} \|f\|_{C^{k+1}(\Delta_m)}n^{-(k+1)/2}. 
  \end{align*}

  Then we consider the second term of the right-hand side of \eqref{eq:induction_k_1}.
  \begin{align*}
    &\left\|\sum_{\|\bm{\alpha}\|_{1}=k+1}(B_n-I)^{\lceil (k+1)/2 \rceil-1}\left\{\frac{\partial^{\bm{\alpha}}f(\bm{\xi})}{\bm{\alpha}!} \sum_{\bm{\nu}\in\bar{\Delta}_m} (\frac{\bm{\nu}}{n}-\mathbf{q})^{\bm{\alpha}}b_{n, \bm{\nu}}(\mathbf{q})\right\}\right\|_{\infty}\\ 
    \lesssim&_{k+1} \|(B_n-I)^{\lceil (k+1)/2 \rceil-1}\|_{\infty}\|f\|_{C^{k+1}(\Delta_m)}\sum_{\|\bm{\alpha}\|_{1}=k+1}\sum_{\bm{\nu}\in\bar{\Delta}_m} \left|( \frac{\bm{\nu}}{n}-\mathbf{q})^{\bm{\alpha}}\right|b_{n, \bm{\nu}}(\mathbf{q})\\ 
    \lesssim&_{k+1} \|(B_n-I)^{\lceil (k+1)/2 \rceil-1}\|_{\infty}\|f\|_{C^{k+1}(\Delta_m)}\sum_{\|\bm{\alpha}\|_{1}=k+1}\left\{\sum_{\bm{\nu}\in\bar{\Delta}_m} (\frac{\bm{\nu}}{n}-\mathbf{q})^{\bm{2\alpha}}b_{n, \bm{\nu}}(\mathbf{q})\right\}^{1/2}\\
    \lesssim&_{k+1, m}  \|(B_n-I)^{\lceil (k+1)/2 \rceil-1}\|_{\infty}\|f\|_{C^{k+1}(\Delta_m)}n^{-(k+1)/2}.
  \end{align*}

  Finally, we have 
  \begin{align*}
    \|(B_n-I)^{\lceil (k+1)/2 \rceil}(f)(\mathbf{q})\|_{\infty}
    \le&\sum_{\|\bm{\alpha}\|_{1}=1}^k \left\|(B_n-I)^{\lceil (k+1)/2 \rceil-1} \left\{\frac{\partial^{\bm{\alpha}}f(\mathbf{q})}{\bm{\alpha}!}\sum_{\bm{\nu}\in\bar{\Delta}_m} (\frac{\bm{\nu}}{n}-\mathbf{q})^{\bm{\alpha}}b_{n, \bm{\nu}}(\mathbf{q})\right\}\right\|_{\infty} \nonumber\\
    &+ \left\|\sum_{\|\bm{\alpha}\|_{1}=k+1}(B_n-I)^{\lceil (k+1)/2 \rceil-1}\left\{\frac{\partial^{\bm{\alpha}}f(\bm{\xi})}{\bm{\alpha}!} \sum_{\bm{\nu}\in\bar{\Delta}_m} (\frac{\bm{\nu}}{n}-\mathbf{q})^{\bm{\alpha}}b_{n, \bm{\nu}}(\mathbf{q})\right\}\right\|_{\infty}\\ 
    \lesssim&_{k+1, m} \sum_{\|\bm{\alpha}\|_{1}=1}^k \|(B_n-I)^{\lceil \|\bm{\alpha}\|_{1}/2 \rceil-1}\|_{\infty}\|f\|_{C^{k+1}(\Delta_m)}n^{-(k+1)/2}\\
    &+\|(B_n-I)^{\lceil (k+1)/2 \rceil-1}\|_{\infty}\|f\|_{C^{k+1}(\Delta_m)}n^{-(k+1)/2}\\
    \lesssim&_{k+1, m} \|f\|_{C^{k+1}(\Delta_m)}n^{-(k+1)/2}.
  \end{align*} 

  The last inequality holds when $\|B_n-I\|_{\infty}$ is bounded by a constant independent of $n$.
  In fact, $\|(B_n-I)f\|_{\infty}=\sup_{\mathbf{q}\in\Delta_m}|(B_n-I)f(\mathbf{q})|=\sup_{\mathbf{q}\in\Delta_m}\left|\sum_{\bm{\nu}\in\bar{\Delta}_m}\{f(\bm{\nu}/n)-f(\mathbf{q})\}b_{n,\bm{\nu}}(\mathbf{q})\right|\le 2\|f\|_{\infty}$ and $\|B_n-I\|_{\infty}=\sup_{f\in C^k(\Delta_m), \|f\|_{\infty}\le 1}\|(B_n-I)f\|_{\infty}\le \sup_{f\in C^k(\Delta_m), \|f\|_{\infty}\le 1} 2\|f\|_{\infty}\le 2$.
\end{proof}

\begin{proof}[Proof of Theorem 
  \ref{thm:discrete_case}]
  The first claim follows from Lemma \ref{lemma:bernstein_approximation} and the fact that $\max_{s\in[m]}\|g_s\|_{C^{2k}(\Delta_m)} \le G$ and $\E_{X^n}\{D_{n,k}(g_s)(\mathbf{T}/n)\}=C_{n,k}(g_s)(\mathbf{q})$.

  Additionally, we have 
  \begin{align*}
      D_{n,k}(g_s)(\mathbf{q})
      &=\sum_{j=0}^{k-1} \binom{k}{j+1}(-1)^{j}B_n^j(g_s)(\mathbf{q})\\
      &=\sum_{j=0}^{k-1} \binom{k}{j+1}(-1)^{j}\left\{g_s(\mathbf{q})+\mathcal{O}_{k,m, G}(n^{-1})\right\}\\
      &=g_s(\mathbf{q})+\mathcal{O}_{k,m, G}(n^{-1}),
  \end{align*}
  and 
  \begin{align*}
      \E_{X^n}\left[\left\{D_{n,k}(g_s)(\mathbf{T}/n)\right\}^2\right]
      &= \sum_{\bm{\nu}\in\bar{\Delta}_m}\left\{D_{n,k}(g_s)(\bm{\nu}/n)\right\}^2b_{n, \bm{\nu}}(\mathbf{q})\\
      &=B_n\left[\left\{D_{n,k}(g_s)\right\}^2\right](\mathbf{q})\\
      &=\left\{D_{n,k}(g_s)(\mathbf{q})\right\}^2+\mathcal{O}_{k,m, G}(n^{-1})\\
      &=\left\{g_s(\mathbf{q})+\mathcal{O}_{k,m, G}(n^{-1})\right\}^2+\mathcal{O}_{k,m, G}(n^{-1})\\
      &=g_s^2(\mathbf{q})+\mathcal{O}_{k,m, G}(n^{-1}).
  \end{align*}

  Therefore,
  \begin{align*}
      \Var_{X^n}\left\{D_{n,k}(g_s)(\mathbf{T}/n)\right\}
      &=\E_{X^n}\left[\left\{D_{n,k}(g_s)(\mathbf{T}/n)\right\}^2\right]-\left[\E_{X^n}\{D_{n,k}(g_s)(\mathbf{T}/n)\}\right]^2\\
      &=g_s^2(\mathbf{q})+\mathcal{O}_{k,m, G}(n^{-1}) - \left\{g_s(\mathbf{q})+\mathcal{O}_{k,m}(n^{-k})\right\}^2\\
      &=\mathcal{O}_{k,m, G}(n^{-1}).
  \end{align*}

\end{proof}

\section{Proof of Theorem \ref{thm:any_order_convergence}}
\begin{proof}[Proof of Theorem \ref{thm:any_order_convergence}]
  By Theorem \ref{thm:discrete_case}, it suffices to let $\widetilde{P}_{X^n}(x=u_s|y=y^*)=D_{n,k}(g_s)(\mathbf{T}/n)$. Moreover, we have $\sum_{s=1}^m D_{n,k}(g_s)(\mathbf{T}/n)=1$.
\end{proof}


\end{document}